\documentclass[10pt, onecolumn]{IEEEtran}

\usepackage{amsmath}
\usepackage{amssymb}
\usepackage{amsfonts}
\usepackage{epsfig}
\usepackage{subfigure}
\usepackage{calc}
\usepackage{color}
\usepackage[all]{xy}
\usepackage{bm}
\usepackage{algorithmicx}
\usepackage{algpseudocode}
\usepackage{algorithm}
\usepackage{enumerate}
\usepackage{pdfsync}
\usepackage{framed}
\usepackage{caption}
\usepackage{amsthm}
\usepackage{blkarray}
\usepackage{float}
\newtheorem{defn}{Definition}

\newtheorem{thm}{Theorem}[section]
\newtheorem{cor}[thm]{Corollary}
\newtheorem{prop}{Proposition}

\newtheorem{lem}[thm]{Lemma}
\newtheorem{conj}[thm]{Conjecture}
\newtheorem{constr}[thm]{Construction}
\newtheorem{note}{Remark}
\newtheorem{example}{Example}
\newcommand{\bit}{\begin{itemize}}
\newcommand{\eit}{\end{itemize}}
\newcommand{\bcor}{\begin{cor}}
\newcommand{\ecor}{\end{cor}}
\newcommand{\beq}{\begin{equation}}
\newcommand{\eeq}{\end{equation}}
\newcommand{\beqn}{\begin{equation*}}
\newcommand{\eeqn}{\end{equation*}}
\newcommand{\bea}{\begin{eqnarray}}
\newcommand{\eea}{\end{eqnarray}}
\newcommand{\bean}{\begin{eqnarray*}}
\newcommand{\eean}{\end{eqnarray*}}
\newcommand{\ben}{\begin{enumerate}}
\newcommand{\een}{\end{enumerate}}
\newcommand{\bdefn}{\begin{defn}}
\newcommand{\edefn}{\end{defn}}
\newcommand{\bnote}{\begin{note}}
\newcommand{\enote}{\end{note}}
\newcommand{\bprop}{\begin{prop}}
\newcommand{\eprop}{\end{prop}}
\newcommand{\blem}{\begin{lem}}
\newcommand{\elem}{\end{lem}}
\newcommand{\bthm}{\begin{thm}}
\newcommand{\ethm}{\end{thm}}
\newcommand{\bconj}{\begin{conj}}
\newcommand{\econj}{\end{conj}}
\newcommand{\bconstr}{\begin{constr}}
\newcommand{\econstr}{\end{constr}}
\newcommand{\bpf}{\begin{proof}}
\newcommand{\epf}{\end{proof}}

\pagenumbering{gobble}

\title{On MBR codes with replication}

%
%
\begin{document}
	\title{On MBR codes with replication}
	
	\author{
		\IEEEauthorblockN{M. Nikhil Krishnan and P. Vijay Kumar, \it{Fellow}, \it{IEEE}}
		
		\IEEEauthorblockA{Department of Electrical Communication Engineering, Indian Institute of Science, Bangalore.  \\ Email: nikhilkrishnan.m@gmail.com, pvk1729@gmail.com} 
		\thanks{P. Vijay Kumar is also an Adjunct Research Professor at the University of Southern California.  This work is supported in part by the National Science Foundation under Grant No. 1421848 and in part by the joint UGC-ISF research program.}
	}
	\maketitle

\begin{abstract}
	An early paper by Rashmi et. al. presented the construction of an $(n,k,d=n-1)$ MBR regenerating code featuring the inherent double replication of all code symbols and repair-by-transfer (RBT), both of which are important in practice.  We first show that no MBR code can contain even a single code symbol that is replicated more than twice.  We then go on to present two new families of MBR codes which feature double replication of all systematic message symbols.  The codes also possess a set of $d$ nodes whose contents include the message symbols and which can be repaired through help-by-transfer (HBT).  As a corollary, we obtain systematic RBT codes for the case $d=(n-1)$ that possess inherent double replication of all code symbols and having a field size of $O(n)$ in comparison with the general, $O(n^2)$ field size requirement of the earlier construction by Rashmi et. al.  For the cases $(k=d=n-2)$ or $(k+1=d=n-2)$, the field size can be reduced to $q=2$, and hence the codes can be binary.  We also give a necessary and sufficient condition for the existence of MBR codes having double replication of all code symbols and also suggest techniques which will enable an arbitrary MBR code to be converted to one with double replication of all code symbols. 
\end{abstract}

\begin{keywords}
	Distributed storage, regenerating codes, exact repair, MBR codes.
\end{keywords}

\section{Introduction}\label{sec:intro}

Dimakis et. al. \cite{DimGodWuWaiRam} introduced a class of distributed storage codes called \textit{regenerating codes}. In the regenerating code framework, a file of $B$ symbols will be encoded to a vector-code having $n$ nodes storing $\alpha$ symbols each. During a single node failure, the failed node can be regenerated by downloading $\beta\leq \alpha$ symbols each, from \textit{any} $d$ surviving nodes (\textit{node-repair property}). Also, by accessing symbols in \textit{any} $k$ nodes, the file can be retrieved (\textit{data-collection property}). \cite{DimGodWuWaiRam} showed existence of a trade-off between $\alpha$ (storage) and $d\beta$ (bandwidth) for given $n$, $k$, $d$, $\beta$ and file-size $B$ given by
\begin{equation}
B \leq \sum_{i=0}^{k-1}{min\{\alpha,(d-i)\beta\}} \label{filesizebound} .
\end{equation}
The \textit{Minimum Storage Regeneration (MSR)} and \textit{Minimum Bandwidth Regeneration (MBR)} points are the two extremal points in the trade-off, where $\alpha$ and $d\beta$ are minimized first respectively.

There are two models of repair for regenerating codes; \textit{functional} and \textit{exact repair}. The bound \eqref{filesizebound} in \cite{DimGodWuWaiRam} was based on the functional repair model. In this model, the contents of a node are permitted to change following repair, while retaining the node-repair and data-collection properties. In the exact repair case, node contents remain the same after repair.  

 At the MBR point, $\alpha$, $\beta$ are given respectively by: $\alpha =  d\beta$, $\beta = \frac{B}{kd-{k\choose 2}}$.
Our focus here is on MBR codes and exact repair, with parameter $\beta=1$.

During node-repair, if repair is carried out simply by {\em reading \em} precisely $1$ symbol from each of the $d$ nodes, the repair is termed as \textit{help-by-transfer (HBT)}.  If in addition to help-by-transfer, no computation is needed at the replacement node either, we will speak of \textit{repair-by-transfer (RBT)}\footnote{some authors use RBT notation for both the scenarios}. Minimizing reads during repair is advantageous, as it translates to savings in the utilization of computational resources, storage-disk durability etc.

Replication of data is a second important consideration in practical distributed storage platforms such as Hadoop \cite{hadoop}. In addition to improving resiliency against errors and erasures, it also helps with data availability when there are transient failures in the system. It also increases chances of local data computation in nodes (data locality), thereby reducing job execution delays (jobs are compute operations to be performed on the data) and bandwidth \cite{mapreduce}.

In \cite{ElrRam}, fractional repetition (FR) codes are introduced which generalize the RBT-MBR construction in \cite{RasShaKumRam_allerton}, focusing more on the replication aspect and relaxing the requirement of {\em any $\mathit{d}$} for repair to a table-based repair-model involving $d$ nodes. In contrast to FR codes, our constructions provide a generalization while staying within the MBR regime, and retaining properties such as reduced reads, systematicity of codes and the best-possible MBR replication level of $2$.

\subsection{Other Related Work}
The Product-Matrix MBR (PM-MBR) codes \cite{RasShaKum_pm} are a family of MBR codes that exists for all possible parameter sets $(n,k\leq d,d\leq n-1,\beta=1)$.  $(n,k\leq d,d=n-1$)-MBR codes having the RBT property for all node-repairs are provided in \cite{RasShaKumRam_allerton} and \cite{ShaRasKumRam_rbt}. These codes are formed via concatenation of an outer MDS code of length ${n \choose 2}$ and an inner replication-$2$ code. In \cite{NovelRBRMBRLinChung}, the authors present repair-by-transfer MBR codes for $d=(n-1)$ based on congruent transformations on skew-symmetric matrices. This approach requires only $O(n)$ field-size and lesser computational complexity over the RBT construction in \cite{ShaRasKumRam_rbt}. \cite{UpdateEfficientPM} gives new encoding matrices for MSR and MBR codes that have the least update complexity in PM framework. \cite{UpdateEfficientPM} also proposes new decoding schemes that have improved error correction capability.  The paper \cite{NiharHBTnonexistence} proves non-existence of $d<(n-1)$ MBR codes  with HBT for all nodes. This paper also gives PM based constructions for two relaxations, namely, HBT for only a specific set of $d$ nodes and HBT recovery from $d$ specific nodes (for all nodes). 

\subsection{Our Contributions}

We present here two new families of $(n,k\leq d,d\leq n-2,\beta=1)$ MBR constructions.  Both constructions have an RBT-MBR code \cite{RasShaKumRam_allerton} contained in $(d+1)$ nodes, as a component of the construction. These $(d+1)$ nodes contain systematic data with double replication and a subset $d$ of these nodes can be repaired via HBT, irrespective of the choice of $d$ helpers.

In Section-\ref{nonexistenceRep}, we show that it is not possible to have a replication level $>2$ even for a symbol in an MBR code, when $k\geq2$.  Section-\ref{sec:constructionGeneral} describes the two MBR constructions. The first family of codes are motivated by the RBT construction appearing in \cite{RasShaKumRam_allerton} and \cite{ShaRasKumRam_rbt}. For $(n,k=n-2,d=n-2)$ and $(n,k=n-3,d=n-2)$, these codes can be implemented over $\mathbb{F}_2$. The second family of codes are based on internal node transformations of PM-MBR, combining ideas from \cite{ShaRasKumRam_rbt} and \cite{NiharHBTnonexistence}. This construction also has the replication, HBT properties. Systematicity of the code is ensured by an additional precoder, which is related to the underlying MDS code appearing in the PM-MBR generator matrix. For $d=k$, both codes have smaller update complexity than any code in the conventional PM framework. In Section-\ref{sec:MBRwithDoubleRep}, we discuss, to what extent double replication can be brought in an MBR code with given parameters $(n,k,d,\beta=1)$ and modified constructions to achieve that.

\section{Non-existence of MBR codes with Replication $>$ 2}\label{nonexistenceRep}

For $i\in[n]\triangleq\{1,2,\ldots,n\}$, let $W_i$, ${}_{\mathcal{D}}S_i^j$ denote random variables corresponding to node-$i$ content and repair-data supplied by node-$i$ ($i\subseteq\mathcal{D}$, $|\mathcal{D}|=d$) to repair node-$j$, respectively. ${}_{\mathcal{D}}S_i^j$ is a function of $W_i$. Let $k\geq2$. From \cite{ShaRasKumRam_rbt}, we list a few equations on entropies of these random variables as the following lemma.  
\begin{lem}\label{lem:entropylemmas}
	\begin{enumerate}
	\item	$H(W_i) =   \alpha = d\beta$
	\item $H(W_i|W_j) = H(W_i|{}_{\mathcal{D}}S_j^i) =(d-1)\beta$
	\item  $H({}_{\mathcal{D}}S_j^i,{}_{\mathcal{D}}S_k^i)   =   2\beta$	
\end{enumerate}
\end{lem}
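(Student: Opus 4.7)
The plan is to derive all three identities from the tightness of the cut-set file-size bound at the MBR point. Since $\alpha = d\beta$, each term $\min\{\alpha,(d-j)\beta\}$ in \eqref{filesizebound} collapses to $(d-j)\beta$, so $B = \sum_{j=0}^{k-1}(d-j)\beta$ is exactly the value attained by any exact-repair MBR code with these parameters.

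To prove (1), fix any ordering $i_1,\ldots,i_k$ of $k$ distinct nodes and expand
\[
B \;=\; H(W_{i_1},\ldots,W_{i_k}) \;=\; \sum_{\ell=1}^{k} H\!\left(W_{i_\ell}\,\big|\,W_{i_1},\ldots,W_{i_{\ell-1}}\right).
\]
For each $\ell$, I would choose a helper set $\mathcal{D}_\ell$ for node $i_\ell$ that contains $\{i_1,\ldots,i_{\ell-1}\}$ (possible since $k\le d\le n-1$). By data processing, the $\ell$-th summand is at most the joint entropy of the remaining $d-\ell+1$ repair packets from $\mathcal{D}_\ell\setminus\{i_1,\ldots,i_{\ell-1}\}$, which is at most $(d-\ell+1)\beta$. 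Summing these bounds gives exactly $B$, so every inequality must be tight; specialising to $\ell=1$ gives $H(W_{i_1})=d\beta$, which is (1).

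For (2), the collection $\{{}_{\mathcal{D}}S_\ell^i\}_{\ell\in\mathcal{D}}$ determines $W_i$ by exact repair, so
\[
d\beta \;=\; H(W_i) \;\le\; H\!\left(\{{}_{\mathcal{D}}S_\ell^i\}_{\ell\in\mathcal{D}}\right) \;\le\; \sum_{\ell\in\mathcal{D}} H({}_{\mathcal{D}}S_\ell^i) \;\le\; d\beta,
\]
which forces the $d$ repair packets to be mutually independent with entropy exactly $\beta$ each. For any $j\in\mathcal{D}$, independence plus exact repair yields $H(W_i\mid{}_{\mathcal{D}}S_j^i)\le (d-1)\beta$, while $H(W_i\mid{}_{\mathcal{D}}S_j^i)\ge H(W_i)-H({}_{\mathcal{D}}S_j^i)=(d-1)\beta$ supplies the matching lower bound. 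Because ${}_{\mathcal{D}}S_j^i$ is a function of $W_j$, data processing then gives $H(W_i\mid W_j)\le (d-1)\beta$; the reverse inequality comes from placing $W_j,W_i$ as the first two entries in the chain-rule expansion used to prove (1) and invoking its tightness.

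Part (3) is an immediate consequence of the mutual independence derived in step (2): $H({}_{\mathcal{D}}S_j^i,{}_{\mathcal{D}}S_k^i)=H({}_{\mathcal{D}}S_j^i)+H({}_{\mathcal{D}}S_k^i)=2\beta$. The one delicate step is the sandwich argument that pins down the mutual independence of the $d$ repair packets; once that is established, the remaining identities reduce to routine bookkeeping with the chain rule and data-processing inequality.
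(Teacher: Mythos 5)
Your proof is correct and follows essentially the same route as the source the paper relies on: the paper does not prove this lemma itself but imports it from the repair-by-transfer paper of Shah et al., whose argument is exactly your two sandwich steps --- tightness of every inequality in the cut-set chain-rule expansion $B=\sum_{\ell}H(W_{i_\ell}\mid W_{i_1},\ldots,W_{i_{\ell-1}})\le\sum_{\ell}(d-\ell+1)\beta=B$ for parts (1) and the first equality of (2), and the squeeze $d\beta=H(W_i)\le H(\{{}_{\mathcal{D}}S_\ell^i\}_{\ell\in\mathcal{D}})\le d\beta$ forcing the $d$ repair packets to be independent with entropy $\beta$ each, which yields the rest of (2) and part (3). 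No gaps.
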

\begin{thm}
	It is not possible to replicate a symbol more than twice in an exact-MBR code with $k\geq2$. 
\end{thm}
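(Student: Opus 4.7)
The plan is to contradict Lemma~\ref{lem:entropylemmas}(3) by showing that replication of a single code symbol in three nodes would force two of the incoming repair symbols at one of those nodes to be bijective functions of the same random variable, and hence far from independent. Suppose, for contradiction, that a symbol $s$ of unit entropy sits in three distinct nodes $i_1,i_2,i_3$. Since $d\leq n-1$, I can pick a helper set $\mathcal{D}$ with $|\mathcal{D}|=d$, $i_1\notin\mathcal{D}$, and $\{i_2,i_3\}\subseteq\mathcal{D}$, and analyze the repair of node $i_1$ from $\mathcal{D}$.

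First, I would extract from Lemma~\ref{lem:entropylemmas} two facts valid for any pair of nodes $i,j$: (a)~$I(W_i;W_j)=\beta=1$, immediate from parts (1) and (2); and (b)~each incoming repair symbol ${}_{\mathcal{D}}S_j^i$ has entropy $\beta$ and is a function of both $W_j$ and $W_i$. For (b), part (3) forces $H({}_{\mathcal{D}}S_j^i)=\beta$, and then part (2) gives $I(W_i;{}_{\mathcal{D}}S_j^i)=\beta$, whence $H({}_{\mathcal{D}}S_j^i\mid W_i)=0$.

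The central step is to argue that $s$ exhausts the mutual information between $W_{i_1}$ and $W_{i_2}$. Since $s$ is a function of $W_{i_1}$, we have $I(W_{i_1};s)=H(s)=1=I(W_{i_1};W_{i_2})$; combined with the fact that $s$ is also a function of $W_{i_2}$, the chain rule
\[
I(W_{i_1};W_{i_2})\;=\;I(W_{i_1};s)\,+\,I(W_{i_1};W_{i_2}\mid s)
\]
forces $I(W_{i_1};W_{i_2}\mid s)=0$. Because ${}_{\mathcal{D}}S_{i_2}^{i_1}$ is a function of $W_{i_2}$, this yields $I(W_{i_1};{}_{\mathcal{D}}S_{i_2}^{i_1}\mid s)=0$; combined with fact (b) that the repair symbol is also a function of $W_{i_1}$, it follows that $H({}_{\mathcal{D}}S_{i_2}^{i_1}\mid s)=0$, and the entropy matching $H({}_{\mathcal{D}}S_{i_2}^{i_1})=H(s)=1$ upgrades this to a bijection. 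The identical argument with $i_3$ in place of $i_2$ shows ${}_{\mathcal{D}}S_{i_3}^{i_1}$ is also a bijection of $s$.

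Hence $H({}_{\mathcal{D}}S_{i_2}^{i_1},{}_{\mathcal{D}}S_{i_3}^{i_1})=H(s)=1$, contradicting the value $2\beta=2$ guaranteed by Lemma~\ref{lem:entropylemmas}(3). The main obstacle is the bijection step: one has to be confident that the single unit of mutual information between two MBR nodes is essentially unique, so that any replicated symbol sitting in both of them coincides with it up to relabeling. The conditional independence $I(W_{i_1};W_{i_2}\mid s)=0$ is the lever that enforces this without appealing to any linearity of the code.
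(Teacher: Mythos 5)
Your proof is correct, and it reaches the same contradiction as the paper --- Lemma~\ref{lem:entropylemmas}(3) cannot hold once both incoming repair symbols ${}_{\mathcal{D}}S_{i_2}^{i_1}$ and ${}_{\mathcal{D}}S_{i_3}^{i_1}$ are pinned to the replicated symbol --- but you get there by the opposite direction of determinism and a genuinely different decomposition. The paper proves $H(x\mid {}_{\mathcal{D}}S_{i_2}^{i_1})=0$ (the repair symbol \emph{determines} $x$) by sandwiching $H(W_1\mid W_2,x)\le H(W_1\mid{}_{\mathcal{D}}S_2^1,x)\le H(W_1\mid{}_{\mathcal{D}}S_2^1)$ between two copies of $(d-1)\beta$, and it never needs the fact that the repair symbol is a function of the node being repaired. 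You instead prove $H({}_{\mathcal{D}}S_{i_2}^{i_1}\mid s)=0$ ($s$ determines the repair symbol), which forces you to first establish $H({}_{\mathcal{D}}S_j^i\mid W_i)=0$; you extract this correctly from parts (2) and (3) of the lemma, and it is the entropy counterpart of the intersection-subspace lemma the paper only invokes in Section~\ref{sec:MBRwithDoubleRep}. Your pivot identity $I(W_{i_1};W_{i_2}\mid s)=0$ is arguably the more transparent explanation of why replication beyond two fails: a replicated unit-entropy symbol already exhausts the $\beta=1$ units of mutual information between any two of its host nodes, so nothing else can be shared with a third. Two minor points: $H(s)=1$ need not be assumed --- it follows from Lemma~\ref{lem:entropylemmas}(1) exactly as the paper notes for $x$, and your argument genuinely needs this equality (not just $H(s)\le 1$) to kill $I(W_{i_1};W_{i_2}\mid s)$; and the concluding equality $H({}_{\mathcal{D}}S_{i_2}^{i_1},{}_{\mathcal{D}}S_{i_3}^{i_1})=H(s)$ is justified by your bijection observation, though the inequality $\le H(s)=1<2\beta$ is all the contradiction requires.
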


\begin{proof}
	
	WOLOG assume a symbol $x$ is present in nodes-$1$, $2$ and $3$. Note that Lemma-\ref{lem:entropylemmas}.$1$ implies:
	\begin{equation*}
	H(x)=1
	\end{equation*} 
	
	From Lemma-\ref{lem:entropylemmas}.$2$, 

	\begin{eqnarray*}
	(d-1)\beta & = & H(W_1|W_2)\\
	& = & H(W_1|W_2,x)\\
	& \le & H(W_1|{}_{\mathcal{D}}S_2^1,x)\\
	& \le &  H(W_1|{}_{\mathcal{D}}S_2^1)\\
	& = & (d-1)\beta.
\end{eqnarray*}
	
	 $\therefore H(W_1|{}_{\mathcal{D}}S_2^1) - H(W_1|{}_{\mathcal{D}}S_2^1,x) =0.$

	Hence, 
	\begin{eqnarray*}
	H(x|{}_{\mathcal{D}}S_2^1)&= & H(x|{}_{\mathcal{D}}S_2^1,W_1)+I(x;W_1|{}_{\mathcal{D}}S_2^1)\\
	 & = & H(x|{}_{\mathcal{D}}S_2^1,W_1)+H(W_1|{}_{\mathcal{D}}S_2^1)-H(W_1|{}_{\mathcal{D}}S_2^1,x)\\
	 & = & H(x|{}_{\mathcal{D}}S_2^1, W_1)\\
	 & \stackrel{(a)}{=} & 0 
    \end{eqnarray*}
    where $(a)$ follows as node-$1$ contains $x$. Similarly, $H(x|{}_{\mathcal{D}}S_3^1)$  $=$ $0$. 
	
	Therefore, we have
	\begin{eqnarray*}
		H({}_{\mathcal{D}}S_2^1,{}_{\mathcal{D}}S_3^1) & = & H({}_{\mathcal{D}}S_2^1)+H({}_{\mathcal{D}}S_3^1|{}_{\mathcal{D}}S_2^1)\\
		& = & H({}_{\mathcal{D}}S_2^1)+H({}_{\mathcal{D}}S_3^1|{}_{\mathcal{D}}S_2^1,x)+I({}_{\mathcal{D}}S_3^1;x|{}_{\mathcal{D}}S_2^1)\\
		& = & H({}_{\mathcal{D}}S_2^1)+H({}_{\mathcal{D}}S_3^1|{}_{\mathcal{D}}S_2^1,x)+H(x|{}_{\mathcal{D}}S_2^1)-H(x|{}_{\mathcal{D}}S_2^1,{}_{\mathcal{D}}S_3^1)\\
		& = & H({}_{\mathcal{D}}S_2^1)+H({}_{\mathcal{D}}S_3^1|{}_{\mathcal{D}}S_2^1,x)\\
		& \leq & H({}_{\mathcal{D}}S_2^1)+H({}_{\mathcal{D}}S_3^1|x)\\
		& = & H({}_{\mathcal{D}}S_2^1)+H({}_{\mathcal{D}}S_3^1)-\big(H(x)-H(x|{}_{\mathcal{D}}S_3^1)\big)\\
		& = & \beta+\beta-(1-0)\\
		& = & 2\beta-1.
	\end{eqnarray*}
	This contradicts Lemma-\ref{lem:entropylemmas}.$3$ and proves that it is not possible to replicate a symbol  more than twice.	
\end{proof}
\begin{note}
	For $k=1$, $\exists$ MBR codes with replication $>2$. For eg., consider a (vector) replication code with $d$ distinct symbols in a node, and every node $1,2,\ldots,n$ identical. This is clearly an $(n,k=1,d,\beta=1)$ MBR code with each symbol having replication $n$.
\end{note}

\section{Product-Matrix MBR and Repair-By-Transfer MBR codes}\label{recapSection}
In this section, we give a brief summary of PM-MBR codes \cite{RasShaKum_pm} and RBT-MBR codes appearing in \cite{RasShaKumRam_allerton}, \cite{ShaRasKumRam_rbt}.

\subsection{Product-Matrix (PM) MBR \cite{RasShaKum_pm}}\label{PMdiscussion}
Consider a symmetric $(d\times d)$ message matrix $M$. 
\begin{align}
\boldsymbol{M} &= 
\begin{bmatrix}
\bf{S} & \bf{T}\\           
$\bf{T}$^t & \bf{0} 
\end{bmatrix}\label{eq:PMmessagematrix}
\end{align} 
where $\bf{S}$ is a $(k\times k)$ symmetric matrix holding ${k+1\choose2}$ message symbols, $\bf{T}$ is a $(k\times(d-k))$ matrix holding $k(d-k)$ message symbols. There is a $(d\times n)$ encoding matrix $\boldsymbol{\psi}$$=$$\boldsymbol{[\phi_{PM}\ \Delta_{PM}]$}$^t$, where $\boldsymbol{\phi_{PM}}$ and $\boldsymbol{\Delta_{PM}}$ are $(n\times k)$ and $(n\times (d-k))$ matrices respectively. Also, any $k$ rows of $\boldsymbol{\phi_{PM}}$ and any $d$ columns of $\boldsymbol{\psi}$ are linearly independent. Let ${\boldsymbol{\psi}}_i$ denote $i^{th}$ column of $\boldsymbol{\psi}$. Node-$i$ content is given by:
\begin{align}
	{\bf{n}}_i={\bf{M}}{\boldsymbol{\psi}}_i\label{eq:PM_nodecontent}
\end{align}
For the systematic version of Product-Matrix MBR,
\begin{align}
\boldsymbol{\psi} &= 
\begin{bmatrix}
\bf{I_k} & \bf{0} \\
{\boldsymbol{\tilde{\phi}}} & \boldsymbol{\tilde{\Delta}} \\
\end{bmatrix}^t\label{eq:PMsystematicMatrix}
\end{align} 
where $\bf{I_k}$ is the $(k\times k)$ identity matrix, $\bf{0}$ is a $(k\times (d-k))$ matrix and $\boldsymbol{[\tilde{\phi}\  \ \tilde{\Delta}]}$ is an $((n-k)\times d)$ Cauchy matrix. The repair-data transmitted from node-$i$ to $j$ (or vice-versa) during repair is given by:
\begin{align}
S_i^j &= {\boldsymbol{\psi}}_j^t{\bf{M}}{\boldsymbol{\psi}}_i\label{eq:PM_repairdata}
\end{align}
\subsection{Repair-by-transfer MBR codes for $d=(n-1)$ \cite{ShaRasKumRam_rbt}}\label{RBTdiscussion}
The construction can be visualized in terms of a complete graph on $n$ nodes. $B$ message symbols will be encoded using an [${n\choose 2}, B$]-MDS code. A code symbol will be assigned to each edge of the graph and a node will be storing symbols assigned to the edges incident on it. During single-node failures, repair can be performed by merely transferring one symbol each, from the remaining $(n-1)$ nodes.
\section{MBR constructions with double replication for systematic message symbols}\label{sec:constructionGeneral}

\subsection{Construction-A for $k=d\leq(n-2)$}\label{deqkltnm2}
Consider a field $\mathbb{F}_q$ with characteristic $2$. For $d=k$, $B = kd - {k \choose 2}={d+1 \choose 2}$. The starting point of this construction will be the code described in \ref{RBTdiscussion}. The code in \ref{RBTdiscussion} can be viewed in terms of a $(d\times d)$ symmetric matrix $\bf{M}$ and it's diagonal. ${d+1\choose 2}$ message symbols $\{m_i\}_{i=1}^{B}$ will be used to populate $\bf{M}$. Column entries of the augmented matrix $\bf{[M|\text{diag}(M)]}$ gives an equivalent description of the code in \ref{RBTdiscussion}, for parameters $(n_{RBT}=d+1,k_{RBT}=d,d_{RBT}=d)$. These $(d+1)$ columns (having systematic message symbols with replication-$2$) form the first $(d+1)$ node contents ${\bf{n}}_1$, ${\bf{n}}_2$, $\ldots$, ${\bf{n}}_{(d+1)}$. 

Consider a $(d\times(n-(d+1)))$ Cauchy matrix $\boldsymbol{\phi}$ over $\mathbb{F}_q$ (char. 2). ${\boldsymbol{\phi}}=[{\boldsymbol{\phi}}_{(d+2)}|{\boldsymbol{\phi}}_{(d+3)}|\ldots|{\boldsymbol{\phi}}_n]$ (where ${\boldsymbol{\phi}}_i$$ \in \mathbb{F}_q^{d\times1}$). Let $[{\boldsymbol{\phi}}_{1}|{\boldsymbol{\phi}}_{2}|\ldots|{\boldsymbol{\phi}}_d]=\bf{I_d}$. Node-$i$ $\in [n]\setminus\{(d+1)\}$ will store ${\bf{n}}_i={\bf{M}}{\boldsymbol{\phi}}_i=[{\bf{n}}_1|{\bf{n}}_2|\ldots|{\bf{n}}_d]{\boldsymbol{\phi}}_i$.
\begin{equation}
[{\bf{n}}_1|{\bf{n}}_2|\ldots|{\bf{n}}_n]=[{\bf{M}}|\text{diag}{\bf{(M)}}| {\bf{M}}{\boldsymbol{\phi}}]\nonumber
\end{equation} 

\begin{note}\label{rem:PMplusDiag}
	As $d=k$, it is clear from \eqref{eq:PMsystematicMatrix} that $[{\bf{n}}_1|{\bf{n}}_2|\ldots|{\bf{n}}_d|{\bf{n}}_{(d+2)}|\ldots|{\bf{n}}_n]$ gives a PM-MBR code on $(n-1)$ nodes, with ${\boldsymbol{\psi}}_j={\boldsymbol{\phi}}_j$ $1\leq j\leq n$, $j\neq (d+1)$. Also, any $d$-subset of $\{{\boldsymbol{\phi}}_j\}$ is independent. 
\end{note}
\begin{lem}
	${({\boldsymbol{\phi}}_i\odot{\boldsymbol{\phi}}_i)}^t{\bf{n}}_{(d+1)}={\boldsymbol{\phi}}_i^t{\bf{n}}_i$, $1\leq i\leq n$, $i\neq (d+1)$ ($\odot$ indicates element-wise multiplication).\label{diagonalNodeHelperDataLemma}
\end{lem}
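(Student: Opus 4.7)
The plan is to unpack both sides of the identity as scalar sums over the entries of $\bf{M}$ and $\boldsymbol{\phi}_i$, and then exploit the symmetry of $\bf{M}$ together with the fact that the characteristic of $\mathbb{F}_q$ is $2$. Write ${\boldsymbol{\phi}}_i = (\phi_{i,1}, \ldots, \phi_{i,d})^t$ and ${\bf{M}} = [m_{j,l}]$. For $i \neq d+1$ we have ${\bf{n}}_i = {\bf{M}}{\boldsymbol{\phi}}_i$, so
\begin{equation*}
{\boldsymbol{\phi}}_i^t {\bf{n}}_i \;=\; {\boldsymbol{\phi}}_i^t {\bf{M}} {\boldsymbol{\phi}}_i \;=\; \sum_{j,l} \phi_{i,j}\, m_{j,l}\, \phi_{i,l}.
\end{equation*}

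The key step is to split this double sum into diagonal and off-diagonal parts. Using $m_{j,l}=m_{l,j}$, the off-diagonal contribution is $2\sum_{j<l} \phi_{i,j}\phi_{i,l}\,m_{j,l}$, which vanishes in characteristic $2$. What remains is $\sum_j \phi_{i,j}^2\, m_{j,j}$, which is precisely ${({\boldsymbol{\phi}}_i\odot{\boldsymbol{\phi}}_i)}^t\,\text{diag}({\bf{M}})$, since ${\bf{n}}_{d+1} = \text{diag}({\bf{M}})$ by construction. This establishes the claimed equality.

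There is essentially no obstacle here beyond being careful about indexing; the crucial observation is that $\boldsymbol{\phi}_i^t {\bf{M}} \boldsymbol{\phi}_i$ is a quadratic form in $\boldsymbol{\phi}_i$ and over a field of characteristic $2$ such a symmetric quadratic form collapses to its ``diagonal part.'' This is exactly why the construction requires $\text{char}(\mathbb{F}_q)=2$: it ensures that the cross-terms $\phi_{i,j}\phi_{i,l}\,m_{j,l}$ cancel pairwise, so that the single node $\mathbf{n}_{d+1}=\text{diag}(\mathbf{M})$ suffices to reproduce the repair symbol $\boldsymbol{\phi}_i^t\mathbf{n}_i$ that a non-diagonal helper would have supplied. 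This will in turn be what enables the HBT property in the subsequent repair analysis.
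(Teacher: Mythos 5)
Your proof is correct and is essentially identical to the paper's: both expand $\boldsymbol{\phi}_i^t\mathbf{M}\boldsymbol{\phi}_i$, use the symmetry of $\mathbf{M}$ to pair off the cross-terms, and invoke characteristic $2$ to kill them, leaving the diagonal part $(\boldsymbol{\phi}_i\odot\boldsymbol{\phi}_i)^t\,\mathrm{diag}(\mathbf{M})=(\boldsymbol{\phi}_i\odot\boldsymbol{\phi}_i)^t\mathbf{n}_{(d+1)}$.
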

\begin{proof}
	RHS =  	${\boldsymbol{\phi}}_i^t{\bf{M}}{\boldsymbol{\phi}}_i$ = $\sum_{a=1}^{d}{\sum_{b\neq a}{[{\boldsymbol{\phi}}_i]_aM_{a,b}[{\boldsymbol{\phi}}_i]_b}}+\sum_{a=1}^{d}{[{\boldsymbol{\phi}}_i]_aM_{a,a}[{\boldsymbol{\phi}}_i]_a}$$\stackrel{(b)}{=}$$\sum_{a=1}^{d}{[{\boldsymbol{\phi}}_i]_aM_{a,a}[{\boldsymbol{\phi}}_i]_a}$ = LHS,
    where $(b)$ follows as $\bf{M}$ is symmetric and $\mathbb{F}_q$ has char. $2$.
	
\end{proof}

\begin{prop}
	The vector-code construction described above is MBR with parameters $(n, k, d=k, \beta = 1, \alpha = d)$
\end{prop}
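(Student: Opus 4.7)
The plan is to verify the four MBR conditions — $\alpha=d$, $\beta=1$, data-collection, and exact node-repair — leaning on Remark~\ref{rem:PMplusDiag} to reduce most cases to known PM-MBR arguments and on Lemma~\ref{diagonalNodeHelperDataLemma} to handle whatever touches node-$(d+1)$. That each node stores $\alpha=d$ symbols and each helper can supply $\beta=1$ symbol follows directly from the construction.

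For data collection with any $k=d$ node set $\mathcal{K}$, there are two cases. If $(d+1)\notin\mathcal{K}$, Remark~\ref{rem:PMplusDiag} identifies $\mathcal{K}$ as a $k$-subset of an $(n-1)$-node PM-MBR code, so PM-MBR decoding recovers $\mathbf{M}$. If $(d+1)\in\mathcal{K}$, pick any $N_d\in[n]\setminus(\mathcal{K}\cup\{d+1\})$, which exists because $n\ge d+2$, and write $\mathcal{K}\setminus\{d+1\}=\{N_1,\ldots,N_{d-1}\}$. The idea is to \emph{virtually} reconstruct $\mathbf{n}_{N_d}=\mathbf{M}\boldsymbol{\phi}_{N_d}$ from $\mathcal{K}$: each $N_j$ supplies the scalar $\boldsymbol{\phi}_{N_j}^t\mathbf{n}_{N_d}=\boldsymbol{\phi}_{N_j}^t\mathbf{M}\boldsymbol{\phi}_{N_d}$, computable as $\mathbf{n}_{N_j}^t\boldsymbol{\phi}_{N_d}$ by symmetry of $\mathbf{M}$; and node-$(d+1)$ supplies $\boldsymbol{\phi}_{N_d}^t\mathbf{M}\boldsymbol{\phi}_{N_d}=(\boldsymbol{\phi}_{N_d}\odot\boldsymbol{\phi}_{N_d})^t\mathbf{n}_{d+1}$ by Lemma~\ref{diagonalNodeHelperDataLemma} applied with index $N_d$. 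These $d$ equations read $[\boldsymbol{\phi}_{N_1}\,|\,\cdots\,|\,\boldsymbol{\phi}_{N_d}]^t\,\mathbf{n}_{N_d}=(\text{known})$, and the coefficient matrix is invertible because any $d$-subset of $\{\boldsymbol{\phi}_j\}_{j\neq d+1}$ is linearly independent (Remark~\ref{rem:PMplusDiag}). Having recovered $\mathbf{n}_{N_d}$, we are left with the PM-MBR $d$-subset $\{\mathbf{n}_{N_1},\ldots,\mathbf{n}_{N_d}\}$, and PM-MBR data collection yields $\mathbf{M}$.

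For node-repair I would split on whether $(d+1)$ is the failed node or appears in the helper set $\mathcal{H}$. (i) If a node $f\neq(d+1)$ fails and $(d+1)\notin\mathcal{H}$, Remark~\ref{rem:PMplusDiag} reduces to PM-MBR repair. (ii) If $f\neq(d+1)$ fails and $(d+1)\in\mathcal{H}$, each other helper $h$ sends the PM-MBR symbol $\boldsymbol{\phi}_h^t\mathbf{M}\boldsymbol{\phi}_f$, while node-$(d+1)$ sends $(\boldsymbol{\phi}_f\odot\boldsymbol{\phi}_f)^t\mathbf{n}_{d+1}=\boldsymbol{\phi}_f^t\mathbf{M}\boldsymbol{\phi}_f$ by Lemma~\ref{diagonalNodeHelperDataLemma}; the resulting $d\times d$ system for $\mathbf{n}_f$ has coefficient matrix $[\boldsymbol{\phi}_{h_1}\,|\,\cdots\,|\,\boldsymbol{\phi}_{h_{d-1}}\,|\,\boldsymbol{\phi}_f]^t$, again invertible by Remark~\ref{rem:PMplusDiag}. (iii) If $f=(d+1)$ fails, each helper $i\in\mathcal{H}$ sends $\boldsymbol{\phi}_i^t\mathbf{n}_i=(\boldsymbol{\phi}_i\odot\boldsymbol{\phi}_i)^t\mathbf{n}_{d+1}$, giving a linear system in $\mathbf{n}_{d+1}$ with coefficient vectors $\{\boldsymbol{\phi}_i\odot\boldsymbol{\phi}_i\}_{i\in\mathcal{H}}$. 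Their independence follows from a characteristic-$2$ Frobenius argument: a dependence $\sum_i c_i\,\boldsymbol{\phi}_i^{\odot 2}=\mathbf{0}$ rewrites coordinate-wise as $(\sum_i c_i^{1/2}\boldsymbol{\phi}_i)^{\odot 2}=\mathbf{0}$ since $(a+b)^2=a^2+b^2$ in char.\ $2$, which forces $\sum_i c_i^{1/2}\boldsymbol{\phi}_i=\mathbf{0}$ and hence $c_i=0$ by Remark~\ref{rem:PMplusDiag}.

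The main obstacle I expect is the data-collection case with node-$(d+1)$ present: the diagonal node encodes only $d$ of the $\binom{d+1}{2}$ message symbols, so one needs an extra PM-MBR-style column's worth of information to close the decoder. The Lemma~\ref{diagonalNodeHelperDataLemma} identity supplies exactly the quadratic form $\boldsymbol{\phi}_{N_d}^t\mathbf{M}\boldsymbol{\phi}_{N_d}$ needed to complete the $d\times d$ linear system for $\mathbf{n}_{N_d}$, provided one picks a free index $N_d$ outside $\mathcal{K}$. The Frobenius/squaring step for repairing node-$(d+1)$ is the other subtle point, and it is precisely the reason the construction fixes $\text{char}(\mathbb{F}_q)=2$.
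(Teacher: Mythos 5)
Your proof is correct and follows essentially the same route as the paper: reduce everything not touching node-$(d+1)$ to PM-MBR via Remark~\ref{rem:PMplusDiag}, and handle node-$(d+1)$ via Lemma~\ref{diagonalNodeHelperDataLemma} together with the characteristic-$2$ squaring argument for the invertibility of $\boldsymbol{\phi}_{\mathcal{D}}\odot\boldsymbol{\phi}_{\mathcal{D}}$. The only presentational difference is data collection, where the paper simply notes that for $d=k$ it follows from node-repair, while you instantiate that reduction explicitly by virtually repairing an outside node $N_d$ — same idea, spelled out.
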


\begin{proof}
{\em Node-repair\em}: From Remark-\ref{rem:PMplusDiag}, it is clear that only repairs involving ${\bf{n}}_{(d+1)}$ need to be checked. 

Assume node-$(d+1)$ failed. Let an arbitrary set $\mathcal{D}=\{i_1,i_2,\ldots,i_d\}$ of $d$ nodes, ($i_j\neq (d+1)$), help in the repair. Each node-$i$ $\in$ $\mathcal{D}$ provides ${\boldsymbol{\phi}}_i^t{\bf{n}}_i$. Applying Lemma-\ref{diagonalNodeHelperDataLemma}, replacement node has access to ${\boldsymbol{\phi}}_{rep}^t{\bf{n}}_{(d+1)}$, where ${\boldsymbol{\phi}}_{rep}={{\boldsymbol{\phi}}}_\mathcal{D}\odot {{\boldsymbol{\phi}}}_\mathcal{D}$, ${\boldsymbol{\phi}}_{\mathcal{D}}=[{\boldsymbol{\phi}}_{i_1}\ {\boldsymbol{\phi}}_{i_2}\ \ldots {\boldsymbol{\phi}}_{i_d}]$. Invertibility of ${\boldsymbol{\phi}}_{rep}$ follows from the invertibility of  ${\boldsymbol{\phi}}_\mathcal{D}$, as field characteristic is $2$. Thus, replacement node can retrieve ${\bf{n}}_{(d+1)}$ from ${\boldsymbol{\phi}}_{rep}^t{\bf{n}}_{(d+1)}$.

Now, consider the failure of node-$i$ ($i\neq (d+1)$). Let the helper nodes come from a set $\mathcal{D'}=\{i_1,i_2,\ldots,i_{(d-1)}\}\cup\{(d+1)\}$ of $d$ nodes, ($i_j\neq i$). Helper node-$(d+1)$ provides ${({\boldsymbol{\phi}}_i\odot{\boldsymbol{\phi}}_i)}^t{\bf{n}}_{(d+1)}$. Any other helper node-$i_j$ gives ${\boldsymbol{\phi}}_{i}^t{\bf{n}}_{i_j}={\boldsymbol{\phi}}_{i}^t{\bf{M}}{\boldsymbol{\phi}}_{i_j}={\boldsymbol{\phi}}_{i_j}^t{\bf{M}}{\boldsymbol{\phi}}_{i}={\boldsymbol{\phi}}_{i_j}^t{\bf{n}}_i$. Applying Lemma-\ref{diagonalNodeHelperDataLemma}, replacement node has access to ${\boldsymbol{\phi}}_{\mathcal{D}'}^t{\bf{n}}_i$, where ${\boldsymbol{\phi}}_{\mathcal{D}'}=[{\boldsymbol{\phi}}_i\ {\boldsymbol{\phi}}_{i_1}\ {\boldsymbol{\phi}}_{i_2}\ \ldots {\boldsymbol{\phi}}_{i_{(d-1)}}]$. Clearly ${\boldsymbol{\phi}}_{\mathcal{D}'}$ is invertible and hence, replacement node can retrieve ${\bf{n}}_i$ from ${\boldsymbol{\phi}}_{\mathcal{D}'}^t{\bf{n}}_i$.

{\em Data-collection\em}: As $d=k$, this follows from node-repair property.

\end{proof}
\begin{note}
	\emph{(Help-by-transfer)} To repair a node-$i$ in $[d]$, a helper node-$j$ will be providing ${\boldsymbol{\phi}}_{i}^t{\bf{n}}_{j}$ or ${(\boldsymbol{\phi}}_{i}\odot {\boldsymbol{\phi}}_{i})^t{\bf{n}}_{j}$. As  ${\boldsymbol{\phi}}_{i}={\bf{e}}_i$, $i^{th}$ column of $\bf{I_d}$, this is equivalent to providing a stored symbol with out computation.
\end{note}
\begin{note}\label{rem:binarykeqd}
	When $d=k=(n-2)$, $\boldsymbol{\phi}={\boldsymbol{\phi}}_n$ can be chosen to be $\boldsymbol{1}$ (all-one vector) and hence, implementation over $\mathbb{F}_2$ is possible.
\end{note}

\subsection{Linearized Polynomials \cite{LidlFiniteFields}}

A polynomial of the form $f(x)=\sum_{i=0}^{t} a_ix^{q^i}$, $a_{t}\neq 0$ is called a linearized polynomial of $q$-degree $t$. The coefficients \{$a_i$\} are coming from $\mathbb{F}_{q^m}$. It is known that $f(b_1x_1+b_2x_2)=b_1f(x_1)+b_2f(x_2)$, where $b_i\in \mathbb{F}_q$, $x_i\in \mathbb{F}_{q^m}$. A linearized polynomial with $q$-degree $t$ can be uniquely determined from $(t+1)$ evaluations at points $\{\theta_i\}_{i=1}^{t+1} \subseteq \mathbb{F}_{q^m}$, which are independent over $\mathbb{F}_{q}$. \cite{Gab} gives maximum rank distance Gabidulin codes based on linearized polynomial evaluations.

\subsection{Construction-A for $k<d\leq (n-2)$}\label{kltdltnm2}
The file-size, $B=kd - {k \choose 2}$ symbols in $\mathbb{F}_{q^m}$ (with char. $2$). Here, $q^m$ is chosen in such a way that $\exists$ $n_c=d^2 - {d \choose 2} = {d+1 \choose 2}$ symbols $\{\theta_i\}_{i=1}^{n_c}\subseteq \mathbb{F}_{q^m}$, which are independent over $\mathbb{F}_{q}$. A systematic [$n_c$, $B$, $n_c-B+1$]-Gabidulin code will be used to encode $B$ message symbols to $n_c$ code symbols $\{c_i\}_{i=1}^{n_c}$. A vector-code $\mathcal{C}$ will be constructed treating these code symbols as message symbols $\{m_i\}$ in \ref{deqkltnm2} (with $\boldsymbol{\phi}$ over $\mathbb{F}_q$).

\begin{prop}
	$\mathcal{C}$ is an MBR code with parameters $(n, k<d, d\leq(n-2), \beta = 1, \alpha = d)$.
\end{prop}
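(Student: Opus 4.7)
The plan is to verify node-repair and data-collection separately. For node-repair, the argument from the Proposition of Section~\ref{deqkltnm2} applies essentially verbatim. That protocol (using Lemma~\ref{diagonalNodeHelperDataLemma} to repair node~$(d+1)$ and a PM-MBR-style argument for the remaining nodes) relies only on the structural facts about $\boldsymbol{\phi}$ (any $d$ columns $\mathbb{F}_q$-independent, characteristic $2$) and $\mathbf{M}$ (symmetric $d\times d$, with $\mathrm{diag}(\mathbf{M})$ also held at node~$(d+1)$); it never accesses the specific entries of $\mathbf{M}$. Since our construction retains exactly these structural properties---only the entries of $\mathbf{M}$ have been replaced by the Gabidulin codeword $\{c_i\}_{i=1}^{n_c}$ in place of raw message symbols---the same protocol continues to work with $\beta = 1$ and $\alpha = d$.

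For data-collection, the starting observation is that each of the $kd$ symbols read from any $k$ chosen nodes is an $\mathbb{F}_q$-linear combination of the $n_c$ Gabidulin symbols $\{c_i\}$: $\boldsymbol{\phi}$ has entries in $\mathbb{F}_q$, and $\mathrm{diag}(\mathbf{M})$ is a subvector of $(c_1,\ldots,c_{n_c})$. Writing this map as $A\mathbf{c}$ with $A \in \mathbb{F}_q^{kd \times n_c}$, and substituting the linearized-polynomial representation $c_i = f(\theta_i)$, each collected symbol equals $f(\xi)$ where $\xi$ is the corresponding $\mathbb{F}_q$-combination of $\{\theta_i\}$. Because $\{\theta_i\}$ are $\mathbb{F}_q$-independent in $\mathbb{F}_{q^m}$, the $\mathbb{F}_q$-dimension of the collected $\xi$'s equals $\mathrm{rank}_{\mathbb{F}_q}(A)$, so once this rank is shown to be at least $B$, linearized-polynomial interpolation at any $B$ $\mathbb{F}_q$-independent $\xi$'s recovers $f$ and hence the message. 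Data-collection thus reduces to proving $\mathrm{rank}_{\mathbb{F}_q}(A) \geq B$ for every choice of $k$ nodes.

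I would establish the rank bound by exhibiting $\binom{k}{2}$ explicit $\mathbb{F}_q$-linear relations among the $kd$ collected forms and arguing they exhaust the relation space. When the $k$ chosen nodes exclude node~$(d+1)$, they lie in the PM-MBR sub-code of Remark~\ref{rem:PMplusDiag}; the relations are the $\binom{k}{2}$ pair-symmetries $\boldsymbol{\phi}_i^t \mathbf{n}_j = \boldsymbol{\phi}_j^t \mathbf{n}_i$, and a direct count identifies the kernel of $\mathbf{c} \mapsto \mathbf{M}(\mathbf{c})\,[\boldsymbol{\phi}_{i_1}\,\ldots\,\boldsymbol{\phi}_{i_k}]$ with the space of symmetric $d\times d$ matrices whose null space contains $\mathrm{span}(\boldsymbol{\phi}_{i_1},\ldots,\boldsymbol{\phi}_{i_k})$, giving kernel dimension $\binom{d-k+1}{2}$ and rank exactly $B$. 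When node~$(d+1)$ is among the chosen $k$, denote the others by $\mathcal{K}' = \{i_1,\ldots,i_{k-1}\}$; the candidates are the $\binom{k-1}{2}$ pair-symmetries on $\mathcal{K}'$ together with the $(k-1)$ diagonal-identities $(\boldsymbol{\phi}_{i_j}\odot \boldsymbol{\phi}_{i_j})^t \mathbf{n}_{(d+1)} = \boldsymbol{\phi}_{i_j}^t \mathbf{n}_{i_j}$ of Lemma~\ref{diagonalNodeHelperDataLemma}, totalling $\binom{k}{2}$. Their $\mathbb{F}_q$-independence I would check position-by-position: reading coefficients at node~$(d+1)$'s positions forces the diagonal-coefficients to zero via $\mathbb{F}_q$-independence of $\{\boldsymbol{\phi}_{i_j}\odot \boldsymbol{\phi}_{i_j}\}$ (which in characteristic~$2$ reduces to that of $\{\boldsymbol{\phi}_{i_j}\}$ through the identity $(\sum_j \gamma_j \boldsymbol{\phi}_{i_j})^2 = \sum_j \gamma_j^2 (\boldsymbol{\phi}_{i_j}\odot \boldsymbol{\phi}_{i_j})$), after which the coefficients at each $\mathcal{K}'$-node's positions force the pair-coefficients to zero by $\mathbb{F}_q$-independence of $\{\boldsymbol{\phi}_{i_l}\}_{l\neq j}$.

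The main obstacle lies in this second case: beyond verifying independence of the $\binom{k}{2}$ exhibited relations, one must also rule out any further hidden $\mathbb{F}_q$-relation arising from the interplay between the diagonal node and a Cauchy-type (non-standard-basis) $\boldsymbol{\phi}_i$. This amounts to a careful kernel count for $\mathbf{c} \mapsto (\mathbf{M}(\mathbf{c})\,\boldsymbol{\phi}_{\mathcal{K}'},\, \mathrm{diag}(\mathbf{M}(\mathbf{c})))$: the PM-MBR kernel has dimension $\binom{d-k+2}{2}$, and intersecting with $\{\mathrm{diag}(\mathbf{M}) = 0\}$ must drop the dimension by exactly $(d-k+1)$ to reach $\binom{d-k+1}{2}$ and hence rank $B$. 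Confirming this exact drop, using the Cauchy structure of $\boldsymbol{\phi}$ to preclude accidental alignment between diagonal and column-kernel constraints, will be the subtlest step.
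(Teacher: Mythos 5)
Your proposal is correct, and its data-collection argument takes a genuinely different route from the paper's. The paper works combinatorially: it identifies the variable set of each collected equation with the edges of the complete graph $\mathcal{G}$ on $[d+1]$ underlying the embedded RBT code, passes to the induced graph $\mathcal{G}'$ after cancelling the interference from the $k'$ nodes read inside $[d+1]$, discards one dependent equation per outside node via Lemma~\ref{diagonalNodeHelperDataLemma}, and then greedily counts at least $\min\{(d-k'-(l-1)),(k-k')\}$ new independent equations per vertex of $\mathcal{G}'$ to reach the total $k'd-{k'\choose 2}+(d-k+1)(k-k')+{k-k'\choose 2}=B$. You instead compute the $\mathbb{F}_q$-kernel of the collection map on the space of symmetric matrices directly, which yields rank \emph{exactly} $B$ in both of your cases and avoids the paper's three-way case split ($k'$ versus $k-k'$, plus the all-in-$[d+1]$ case); the price is that the paper's edge-counting is arguably more self-evidently tight, whereas yours hinges on a congruence-transformation computation. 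Both arguments then finish identically via linearized-polynomial interpolation at $B$ $\mathbb{F}_q$-independent points. Regarding the step you flag as subtlest: it does close, and more easily than you fear --- the Cauchy structure plays no role there, only linear independence of $\Phi_{\mathcal{K}'}$'s columns and characteristic $2$. Writing $P=[\Phi_{\mathcal{K}'}\,|\,R]$ invertible over $\mathbb{F}_q$ and $N=P^t\mathbf{M}P$, the condition $\mathbf{M}\Phi_{\mathcal{K}'}=0$ confines $N$ to the trailing $(d-k+1)\times(d-k+1)$ symmetric block, and in characteristic $2$ the quadratic-form identity $w^t N w=\sum_b N_{bb}w_b^2$ gives $\mathbf{M}_{aa}=\sum_b N_{bb}\bigl((P^{-1})_{ba}\bigr)^2$; since entrywise squaring is the Frobenius and preserves rank, the trailing $d-k+1$ rows of $P^{-1}$ give a full-rank map from $(N_{bb})_b$ to $\mathrm{diag}(\mathbf{M})$, so imposing $\mathrm{diag}(\mathbf{M})=0$ drops the kernel dimension by exactly $d-k+1$, as required. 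This is the same characteristic-$2$ mechanism the paper already invokes for the invertibility of $\boldsymbol{\phi}_{\mathcal{D}}\odot\boldsymbol{\phi}_{\mathcal{D}}$ in node repair. With that step written out, your proof is complete.
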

\begin{proof}
	
	{\em Node-repair\em}: Node-repair property is inherited from the underlying construction in \ref{deqkltnm2}.
	
	{\em Data-collection (DC)\em}: 
	If all the $k$ nodes are in $[d+1]$, DC follows from the embedded RBT code. Consider the case when $k'<k$ nodes are read from $[d+1]$ and $(k-k')$ nodes are read from $\mathcal{K}''$$\subseteq$ $[n]\setminus[d+1]$ ($|\mathcal{K}''|=k-k'$). Using the notation in \ref{deqkltnm2}, each node-$j$ in $\mathcal{K}''$ has access to symbols $\{{\boldsymbol{\phi}}_j^t{\bf{n}}_i\}_{i\in[d]} \cup\{({\boldsymbol{\phi}}_j\odot{\boldsymbol{\phi}}_j)^t{\bf{n}}_{(d+1)}\}$, which are $(d+1)$ equations on $n_c$ variables $\{c_i\}_{i=1}^{n_c}$. Set of variables $\mathcal{V}_i\subseteq\{c_i\}_{i=1}^{n_c}$ arising in each equation-$i$ can be determined using a complete graph $\mathcal{G}$ on $(d+1)$ vertices. $\mathcal{G}$ will be identical to the `RBT graph' (described in \ref{RBTdiscussion}) contained in the first $(d+1)$ nodes of $\mathcal{C}$ (Fig. \ref{fig:gen_example_n_eq_8} gives an example). After removing the (known) interference from $k'$ nodes, let $\mathcal{V}_i'$ denote the variable set corresponding to a (modified) equation-$i$ coming from a node in $\mathcal{K}''$. $\{\mathcal{V}_i'\}$ are  determined by the induced (complete) graph $\mathcal{G}'$ of $\mathcal{G}$ on the remaining vertices in $[d+1]$ (illustrated in Fig. \ref{fig:gen_example_n_eq_8}). Contents of $\mathcal{V}_i'$ are exactly the edges incident on a vertex-$i$ in $\mathcal{G}'$. Let $i_1, i_2, \ldots, i_{(d-k'+1)}$ be the vertices in $\mathcal{G}'$. Any one among the $(d-k'+1)$ equations given by a node in $\mathcal{K}''$ is dependent on the others (application of Lemma-\ref{diagonalNodeHelperDataLemma}). Hence, one can remove a vertex (i.e., a modified equation) $i_{(d-k'+1)}$ from $\mathcal{G}'$. All the remaining equations are independent as there is an unshared edge (unshared $c_j$) for each vertex. Each vertex contains $(d-k')$ $c_i$'s, i.e., $|\mathcal{V}_i'|=d-k'$. 
	
	Now, we will obtain a lower bound on the number of independent equations arising from all the $(k-k')$ nodes $\in$ $\mathcal{K}''$. Let equations corresponding to $\{i_1,\ldots,i_{l-1}\}$ be already added to a set $\mathcal{S}$. Considering equations corresponding to $i_l$ ($1\leq l\leq(d-k')$) from all nodes $\in$ $\mathcal{K''}$, there will be {\em at least\em} $min\{(d-k'-(l-1)), (k-k')\}$ independent equations that can be added to $\mathcal{S}$, as $|\mathcal{V}_{i_l}'\setminus\cup_{j=1}^{l-1}\mathcal{V}_{i_j}'|=d-k'-(l-1)$ (follows from $\mathcal{G}'$ structure) and $\{{\boldsymbol{\phi}}_i\}_{i\in \mathcal{K}''}$ comes from columns of a Cauchy matrix. It can be easily verified that nodes $\in$ $\mathcal{K}''$ will be providing {\em at least\em} $(d-k+1)(k-k')+{k-k' \choose 2}$ independent equations, in total. Note that each independent equation (over $\mathbb{F}_q$) corresponds to an independent evaluation of the underlying linearized polynomial. Thus, data-collector will be having access to at least $k'd-{k' \choose 2}+(d-k+1)(k-k')+{k-k' \choose 2}=kd-{k \choose 2}=B$ polynomial evaluations, at points independent over $\mathbb{F}_q$. Hence, data-collection is possible.

\end{proof}



\begin{figure}[t]
	\centering
	\captionsetup{justification=centering}
	\includegraphics[width=5in]{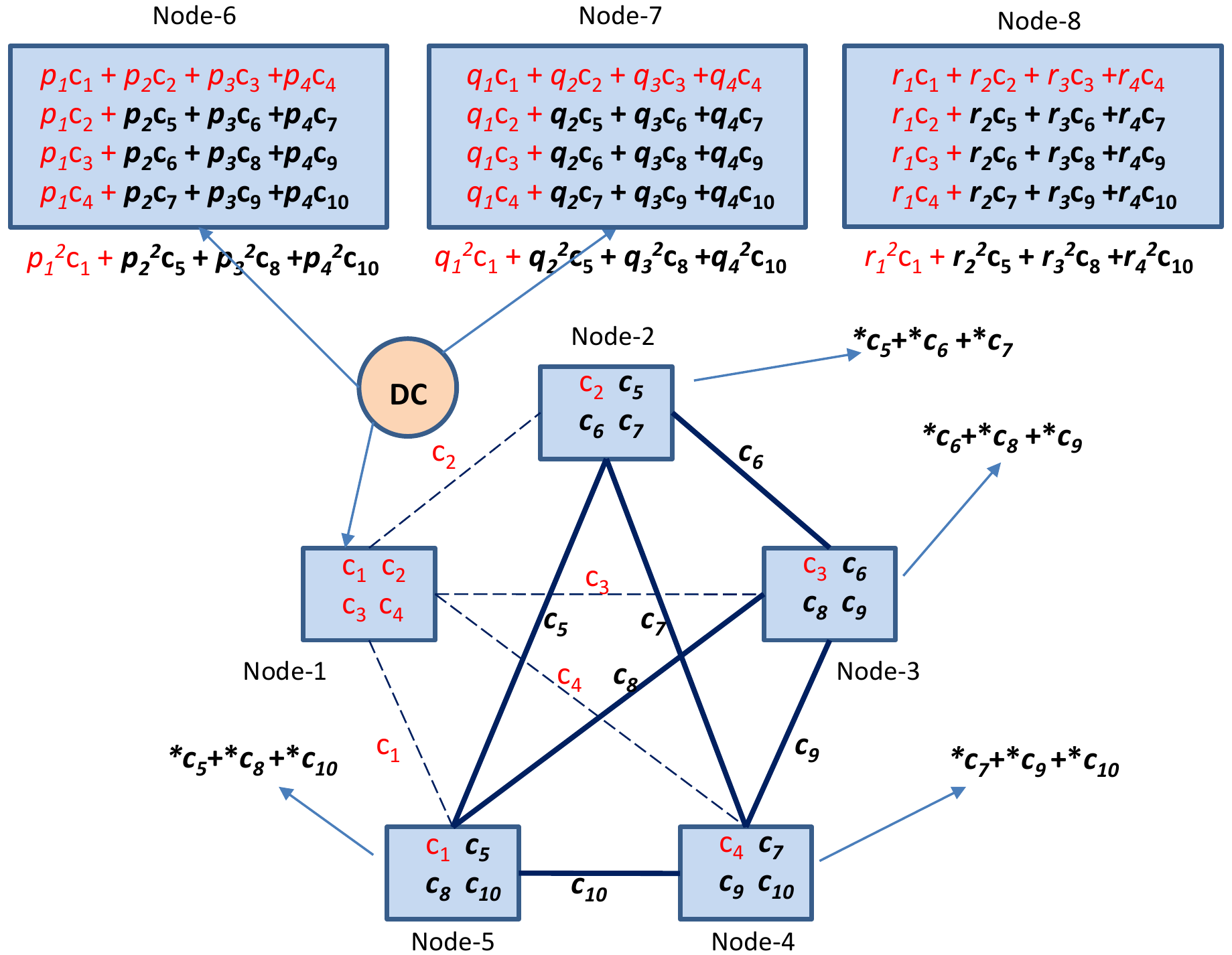}
	\caption{An example construction for $n=8, k=3, d=4$ Here, file-size is $9$. Data-collector is connected to nodes-$1$, $6$ and $7$. $k'=1$.}
	\label{fig:gen_example_n_eq_8}
\end{figure}

\subsection{Construction-A for $k=(n-3)$, $d=(n-2)$ over $\mathbb{F}_2$}{\label{k_eq_nm3_binary}
	Here, file-size, $B={(d+1) \choose 2}-1$. A parity symbol $p$ will be obtained by taking the XOR of  $\{m_i\}_{i=1}^{B}$. Define $c_i=m_i$, $1\leq i\leq B$ and $c_{(B+1)}=p$. These ${(d+1) \choose 2}$ symbols $\{c_i\}_{i=1}^{B+1}$ will be used as message symbols $\{m_i\}$ for the construction in \ref{deqkltnm2}, with ${\boldsymbol{\phi}}$ as in Remark-\ref{rem:binarykeqd}. The code thus formed will be MBR. Only DC property needs to be verified as node-repair property is inherited from the underlying construction.
	
	If all the $k$ nodes are read from $[n-1]$, DC follows from the MDS nature of the inner single parity check code. Consider the remaining case where $(k-1)=(n-4)$ nodes are read from
	$[n-1]$ along with node-$n$. After removing the known interference of symbols from $(n-4)$ nodes in $[n-1]$, there will be three sums on three unknowns (of which only two are independent) given by node-$n$; $(c_i+c_j)$, $(c_i+c_k)$ \& $(c_j+c_k)$. As $(c_i+c_j+c_k)$ is known from the single parity check nature of the inner code, $c_i$, $c_j$ and $c_k$  can be solved. An example is illustrated in Fig. \ref{fig:generalconstruction_n_5_k_2_d_3}.

	\begin{figure}[ht]
		\centering
		\captionsetup{justification=centering}
		\includegraphics[width=2in]{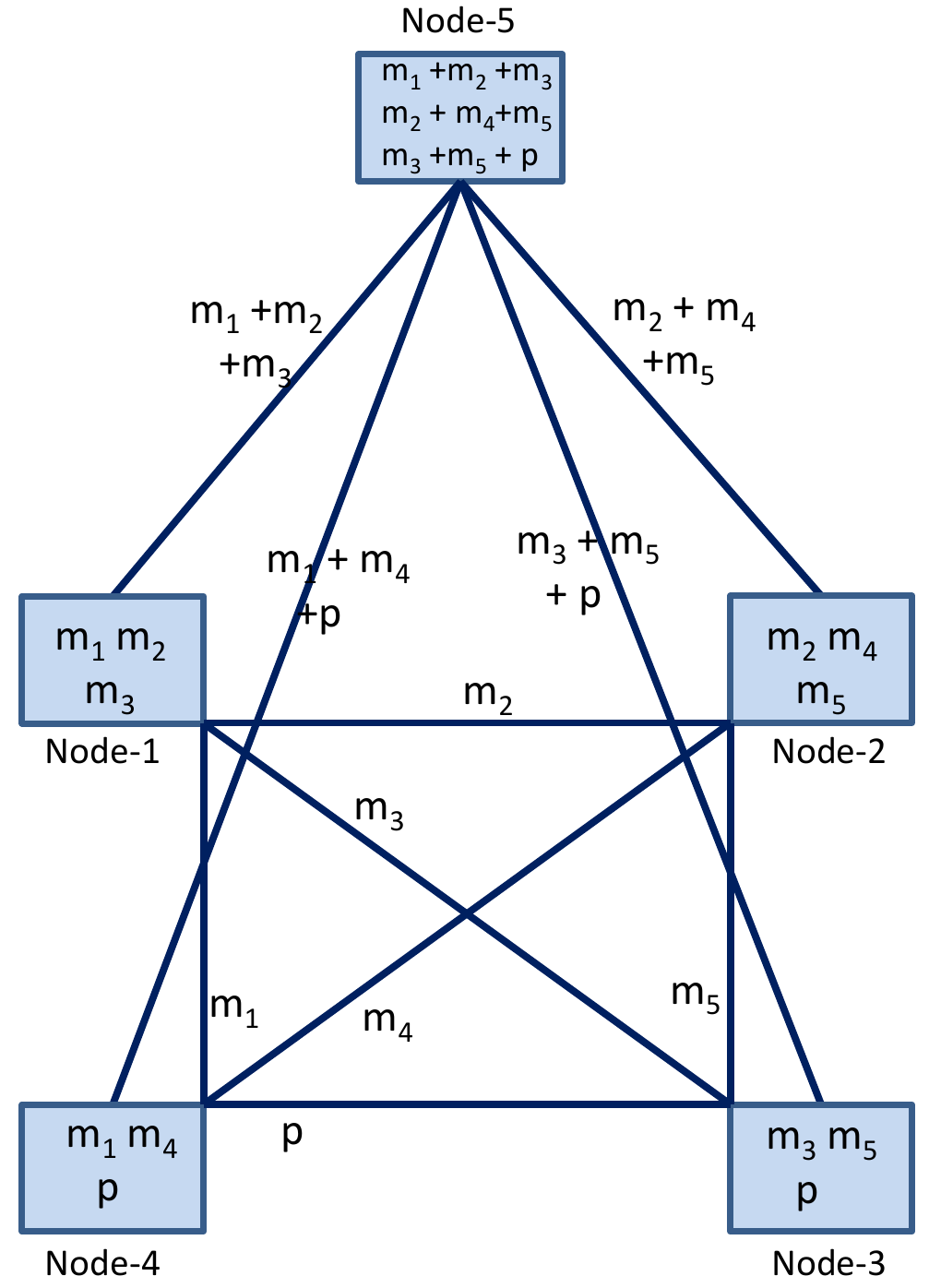}
		\caption{An example MBR construction over $\mathbb{F}_2$ for $(n=5, k=2, d=3)$.  File-size is $5$. All $m_i$'s have replication-$2$ and every solid edge coming to \{$1,2,3$\} carries HBT repair-data from the other end. }
		\label{fig:generalconstruction_n_5_k_2_d_3}
	\end{figure}

\subsection{Construction-B for $k\leq d\leq(n-1)$}\label{subsec:precodedPM}	

Shah et. al. \cite{NiharHBTnonexistence} gives an MBR construction whose $d$ nodes (these nodes contain a fraction of systematic message symbols) are repairable via HBT. Combining ideas from \cite{NiharHBTnonexistence} and \cite{ShaRasKumRam_rbt}, we first modify the conventional PM-MBR code discussed in \ref{PMdiscussion} to the following:
\begin{align}
{\bf{n}}_i &= {\boldsymbol{\chi}}(i)^t{\bf{M}}{\boldsymbol{\psi}}_i& 1\leq i\leq (d+1)\label{eq:PMnodecontentModified1}
\end{align}
where $\boldsymbol{\chi}(i)$ is a $(d \times d)$ invertible matrix whose columns are $\boldsymbol{\psi}_l$'s,  $1\leq l\leq (d+1),l\neq i$, in any order.

\begin{align}
{\bf{n}}_i &= {\boldsymbol{\chi}}(d+1)^t{\bf{M}}{\boldsymbol{\psi}}_i& (d+2)\leq i\leq n\label{eq:PMnodecontentModified2}
\end{align}

 These are invertible linear transformations of a node in the conventional PM-MBR code and hence, the code is linear MBR, with repair-data between two nodes same as that in \eqref{eq:PM_repairdata}. \eqref{eq:PMnodecontentModified1} implies that the first $(d+1)$ nodes have a structure identical to the RBT-MBR code in \cite{ShaRasKumRam_rbt}, with replication-$2$. \eqref{eq:PMnodecontentModified1} and \eqref{eq:PMnodecontentModified2} guarantee HBT property for repairs in $[d]$. We will be using an encoding matrix of the form \eqref{eq:PMsystematicMatrix}. $\boldsymbol{\psi}_i = \bf{e}_i$, $1\leq i\leq k$ where $\bf{e}_i$ denote $i^{th}$ column of the $(d\times d)$ identity matrix.

To get systematicity for the construction, we fill $\bf{M}$ with $B$ precoded symbols (structure as in \eqref{eq:PMmessagematrix}). First we consider a matrix $\bf{M}$$'$ filled with raw message symbols $\{m_i\}_{i=1}^{B}$ as in \eqref{eq:PMmessagematrix}. Let $\bf{M}$$_i$, $\bf{M}$$_{i}$$^{'}$ denote $i^{th}$ columns of $\bf{M}$ and $\bf{M}$$^{'}$ respectively. Both $\bf{M}$ and $\bf{M}$$^{'}$ are completely determined by their first $k$ columns (or rows). The precoder does the following set of $k$ operations to obtain $\bf{M}$ from $\bf{M}$$^{'}$.
\begin{align}
{\bf{M}}_i &= [{\boldsymbol{\Lambda}}(i)^t]^{-1}{\bf{M}}_i^{'} &  1 \leq i \leq k\label{eq:precodingPM}
\end{align}
where $\boldsymbol{\Lambda}(i)$ is a $(d\times d)$ invertible matrix whose $j^{th}$ column is given by:\[
	[\boldsymbol{\Lambda}(i)]_j = \left\{\begin{array}{lr}
	\boldsymbol{\psi}_{(d+1)} &  j=i\\
	\boldsymbol{\psi}_j & 1\leq j\leq d, j\neq i
	\end{array}\right.
	\]
 The equations in \eqref{eq:precodingPM} are indeed consistent, as they induce identity mapping for all $m_i$'s occurring more than once. The following example illustrates the precoder.

 	\begin{note}
 		\emph{(Update complexity)} Compared to PM-MBR, both the code families have lesser number of symbols modified for a single message symbol change, when $d=k$. 
    \end{note} 
 
 \begin{example}
 	Let the MBR parameters be $(n\geq4,k=2,d=3,\beta=1)$. $\therefore$ $B=5$ symbols.
    Let
 	\begin{equation*}
 	{\bf{M}'}= 
 	\begin{bmatrix}
 	m_1 & m_2 & m_3\\           
 	m_2 & m_4 & m_5\\
 	m_3 & m_5 & 0
 	\end{bmatrix},\ \boldsymbol{\psi}_1=[1\ 0\ 0]^t,\  \boldsymbol{\psi}_2=[0\ 1\ 0]^t
 	\end{equation*}
 	
 	$\therefore \boldsymbol{\Lambda}(1)=[\boldsymbol{\psi}_4\ \boldsymbol{\psi}_2\ \boldsymbol{\psi}_3]$, $\boldsymbol{\Lambda}(2)=[\boldsymbol{\psi}_1\ \boldsymbol{\psi}_4\ \boldsymbol{\psi}_3]$, ${\bf{M}}_1= {[\boldsymbol{\Lambda}(1)^t]}^{-1}{\bf{M}}_1^{'}$, ${\bf{M}}_2= {[\boldsymbol{\Lambda}(2)^t]}^{-1}{\bf{M}}_2^{'}$, $	\boldsymbol{\chi}_{1}=[\boldsymbol{\psi}_2\ \boldsymbol{\psi}_3\ \boldsymbol{\psi}_4]$, $\boldsymbol{\chi}_{2}=[\boldsymbol{\psi}_1\ \boldsymbol{\psi}_3\ \boldsymbol{\psi}_4]$. ${\bf{n}}_1$ = $\boldsymbol{\chi}_1^t\boldsymbol{M}\boldsymbol{\psi}_1$ =$\boldsymbol{\chi}_1^t\boldsymbol{M}_1$= $\boldsymbol{\chi}_1^t[\boldsymbol{\Lambda}(1)^t]^{-1}{\bf{M}}_1'$= $[m_2\  m_3\  m_1]^t$.
   Similarly, ${\bf{n}}_2$=$[m_2\  m_5\  m_4]^t$.
 	
 	\end{example}

\section{Construction of MBR codes with double replication for all symbols}\label{sec:MBRwithDoubleRep}

Consider a linear $(n,k,d,\alpha, \beta)$ MBR code over $\mathbb{F}$. Generator matrix, $\bf{G}=[G_1|G_2|\ldots|G_n]$, $\bf{G}$$  \in \mathbb{F}^{B\times n\alpha}$, $\bf{G_i}$$\in \mathbb{F}^{B\times \alpha}$. $\bf{n_i=G_i^tf}$, where $\bf{f}$ is the file-vector. We define the following subspaces, as in \cite{RasShaKumRam_allerton}.  The node subspace associated with node-$i$, $W_i=col(\bf{G_i})$, where $col(.)$ denotes column space. To repair a failed node-$i$, a helper node-$j$ will provide $\beta$ symbols. This will be viewed as passing a subspace of $W_j$ and termed a repair subspace of node-$j$. Node subspaces and repair subspaces have dimensions $\alpha$ and $\beta$ respectively \cite{RasShaKumRam_allerton}.
\begin{lem}\label{repairsubspaceIsNodeIntersection}
	The repair subspace from node-$i$ to node-$j$ is the intersection space of $W_i$ and $W_j$. 
\end{lem}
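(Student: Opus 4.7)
The plan is to prove $R_{i\to j} = W_i\cap W_j$ (writing $R_{i\to j}$ for the repair subspace from node-$i$ to node-$j$) in two steps: first I would establish the inclusion $R_{i\to j}\subseteq W_i\cap W_j$, and then I would show both subspaces have dimension $\beta$, forcing equality.

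For the inclusion, one containment is immediate: $R_{i\to j}\subseteq W_i$ since the $\beta$ symbols node-$i$ transmits are linear combinations of its $\alpha$ stored symbols. For the less obvious containment $R_{i\to j}\subseteq W_j$, I would fix any helper set $\mathcal{D}$ of size $d$ with $i\in\mathcal{D}$. Exact repair of node-$j$ using $\mathcal{D}$ requires that $W_j$ be recoverable from $\{R_{\ell\to j}\}_{\ell\in\mathcal{D}}$, so $W_j\subseteq\sum_{\ell\in\mathcal{D}} R_{\ell\to j}$. Since $\dim\sum_{\ell\in\mathcal{D}} R_{\ell\to j}\le d\beta=\alpha=\dim W_j$, equality holds and the sum is in fact direct. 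In particular each individual $R_{\ell\to j}$, and hence $R_{i\to j}$, is contained in $W_j$.

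For the dimension count, I would invoke Lemma~\ref{lem:entropylemmas}: parts (1) and (2) give $I(W_i;W_j)=H(W_i)-H(W_i|W_j)=d\beta-(d-1)\beta=\beta$. For a linear code whose file vector is a uniformly distributed tuple over $\mathbb{F}$, node entropies equal column-space dimensions and $I(W_i;W_j)=\dim(W_i\cap W_j)$, so $\dim(W_i\cap W_j)=\beta$. Combined with $\dim R_{i\to j}=\beta$ and the containment above, this yields $R_{i\to j}=W_i\cap W_j$.

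The only subtle point is the passage from the entropy statement of Lemma~\ref{lem:entropylemmas} to a statement about subspace dimensions, but this is routine for linear codes with uniformly random message symbols over $\mathbb{F}$, so I would treat it as a one-line remark rather than a separate argument. The rest is essentially a dimension-counting exercise once the containment in $W_j$ is extracted from the MBR condition $\alpha=d\beta$.
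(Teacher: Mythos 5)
Your proof is correct, but it is worth noting that the paper does not actually prove this lemma at all: its ``proof'' is a one-line citation of Lemma~3 in the Allerton paper of Rashmi et al.\ \cite{RasShaKumRam_allerton}. What you have done is reconstruct, essentially from first principles, the argument that the paper outsources. Your reconstruction is sound: the containment $R_{i\to j}\subseteq W_i$ is immediate from linearity; the containment in $W_j$ follows correctly from $W_j\subseteq\sum_{\ell\in\mathcal{D}}R_{\ell\to j}$ together with $\dim\sum_{\ell}R_{\ell\to j}\le d\beta=\alpha=\dim W_j$ (and, as a pleasant byproduct, this same count already forces $\dim R_{\ell\to j}=\beta$ for every $\ell$ and the directness of the sum, so you do not really need to import $\dim R_{i\to j}=\beta$ as a separate fact from the paragraph preceding the lemma --- it falls out of your own argument); and the identification $\dim(W_i\cap W_j)=I(W_i;W_j)=\beta$ via Lemma~\ref{lem:entropylemmas} is the standard dictionary between entropies of linear functionals of a uniform file vector and subspace dimensions. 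The one step you should make explicit rather than leave implicit is why exact repair forces the \emph{subspace} containment $W_j\subseteq\sum_{\ell}R_{\ell\to j}$ even if the replacement node's computation is not assumed linear: a linear functional $g^t\mathbf{f}$ of a uniformly random $\mathbf{f}$ that is a deterministic function of other linear functionals must lie in their span. With that remark added, your proof is a complete, self-contained substitute for the citation, which is arguably more useful to the reader than what the paper provides.
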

\begin{proof}
	Lemma-$3$ in  \cite{RasShaKumRam_allerton}.
\end{proof}
\begin{lem}\label{substitutionLemma}
	A linear MBR code can be transformed to another linear MBR code by substituting a node-$i$ content with repair-data for that node. The repair-data can come from an arbitrary set $\mathcal{D}$ of $d$ nodes and the substitution process is equivalent to re-encoding node-$i$ with a new basis set $\bf{G_i}'$ for $W_i$, instead of $\bf{G_i}$.
	\end{lem}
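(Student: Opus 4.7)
The plan is to show that after substitution, the new node-$i$ column block $\bf{G_i}'$ is a basis of the very same subspace $W_i$, and then invoke the fact that the MBR property is determined entirely by the node subspaces $W_j$.

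First, I would apply Lemma-\ref{repairsubspaceIsNodeIntersection} to each helper $j \in \mathcal{D}$: the $\beta$ symbols it transmits during repair of node-$i$ lie in a subspace of the intersection $W_i \cap W_j$, which has dimension $\beta$. Therefore all $d\beta = \alpha$ repair symbols collected at the replacement node are vectors lying inside $W_i$. Since exact repair is possible in the original code, the replacement node must be able to reconstruct $\bf{G_i^t f}$ for every choice of file-vector $\bf{f}$; equivalently, the $d\beta$ repair vectors must span $W_i$. As their number equals $\alpha = \dim W_i$, they in fact form a basis of $W_i$. Stacking them as columns of a $B \times \alpha$ matrix $\bf{G_i}'$ yields $col(\bf{G_i}') = W_i = col(\bf{G_i})$, which is the claim that the substitution is a change of basis within $W_i$.

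Next, I would argue that replacing $\bf{G_i}$ by $\bf{G_i}'$ preserves the MBR property. The node subspaces of the new code are identical to those of the original code (only a basis changed at node-$i$), and both data-collection and node-repair depend only on these subspaces: data-collection from any $k$ nodes amounts to requiring that the sum of the corresponding $W_j$ have dimension $B$, while node-repair is captured, via Lemma-\ref{repairsubspaceIsNodeIntersection}, by the intersection subspaces $W_l \cap W_j$. All these subspaces are unchanged, so every repair and every data-collection that worked before still works in the new code, only the local decoding matrix at node-$i$ needs to be updated.

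The main subtlety I would spell out is exact-repair of node-$i$ itself after the swap: for any set $\mathcal{D}'$ of $d$ helpers (not necessarily the original $\mathcal{D}$ used to define $\bf{G_i}'$), the incoming symbols still span $\sum_{j \in \mathcal{D}'}(W_i \cap W_j) = W_i$ by exactly the same argument as above, so the replacement node can solve a full-rank linear system for the $\alpha$ coordinates of its content in the new basis. The parameters $(n,k,d,\alpha,\beta)$ remain unchanged, hence the resulting code is again MBR, completing the proof.
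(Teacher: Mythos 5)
Your proof is correct, and it takes the same subspace-based route that the paper relies on: the paper's ``proof'' is simply a citation of Lemma~3 and Theorem~4 of \cite{RasShaKumRam_allerton}, and your argument reconstructs exactly the content of those cited results --- the repair vectors lie in $W_i\cap W_j$ by Lemma-\ref{repairsubspaceIsNodeIntersection}, exact repair forces the $d\beta=\alpha$ of them to span (hence form a basis of) $W_i$, and all data-collection and repair properties of a linear MBR code depend only on the unchanged node subspaces. Your explicit check that node-$i$ remains repairable from an arbitrary helper set $\mathcal{D}'$, not just the $\mathcal{D}$ used to define $\bf{G_i}'$, is the right subtlety to flag and is handled correctly.
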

\begin{proof}
	Lemma-$3$ \& Theorem-$4$ in \cite{RasShaKumRam_allerton}.
\end{proof}

\begin{thm}\label{thm:MBRwithrep2existence}
	An $(n,k,d,\beta=1)$ MBR code with all symbols replicated twice exists iff there is a simple $d$-regular graph on $n$ vertices. Therefore, MBR codes with inherent double replication exists iff $nd$ is even.
\end{thm}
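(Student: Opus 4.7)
The plan is to prove the biconditional in two directions and then deduce the parity criterion from elementary graph theory.

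\textbf{Forward direction} ($\Rightarrow$). Given an $(n,k,d,\beta=1)$ MBR code in which every code symbol is replicated exactly twice, associate to it the graph $G$ on vertex set $[n]$ in which each code symbol shared between nodes $i$ and $j$ contributes an edge $\{i,j\}$. By Lemma \ref{repairsubspaceIsNodeIntersection}, $\dim(W_i \cap W_j) = \beta = 1$, so if two of node $i$'s (necessarily linearly independent) stored symbols were also both stored in node $j$, they would both lie in the $1$-dimensional space $W_i \cap W_j$ and hence be linearly dependent---a contradiction. Thus each pair of nodes shares at most one symbol, and $G$ is simple. Since node $i$ stores $\alpha = d$ symbols, each replicated in exactly one other (necessarily distinct) node, $\deg_G(i) = d$.

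\textbf{Reverse direction} ($\Leftarrow$). Let $G$ be any simple $d$-regular graph on $[n]$ and start from an arbitrary linear $(n,k,d,\beta=1)$ MBR code (for instance PM-MBR \cite{RasShaKum_pm}) with node subspaces $W_1,\ldots,W_n$. For every edge $\{i,j\}\in E(G)$, pick one nonzero vector $v_{ij}=v_{ji}$ in the $1$-dimensional space $W_i\cap W_j$. Because the underlying code admits repair of node $i$ from any $d$ helpers---in particular from its $d$ neighbors in $G$---the $d$ one-dimensional subspaces $\{W_i\cap W_j:j\sim i\}$ jointly span the $d$-dimensional space $W_i$; by dimension count they are linearly independent, so $\{v_{ij}\}_{j\sim i}$ is a basis of $W_i$. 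Lemma \ref{substitutionLemma} then lets us re-encode each node $i$ with this basis while remaining inside the class of linear MBR codes; the resulting code stores each $v_{ij}$ in exactly the two nodes $i$ and $j$, so every code symbol is replicated exactly twice.

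\textbf{Parity corollary.} The biconditional just proved reduces the existence of a double-replication MBR code to the existence of a simple $d$-regular graph on $n$ vertices (with $d\le n-1$, which is built into the MBR setting). The handshake lemma $\sum_i\deg_G(i)=2|E(G)|$ forces $nd$ to be even. Conversely, given $nd$ even and $d\le n-1$, a circulant $\mathbb{Z}_n$-graph with connection set $\{\pm 1,\ldots,\pm\lfloor d/2\rfloor\}$, augmented by $n/2$ when $d$ is odd (in which case $n$ must be even), is an explicit simple $d$-regular graph.

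\textbf{Where the work lies.} The only point requiring care is the simultaneity of the substitutions in the reverse direction: one must observe that Lemma \ref{substitutionLemma} alters only the basis of a node and not its subspace $W_i$, so every intersection $W_i\cap W_j$---and hence every preselected vector $v_{ij}$---is unchanged by substitutions carried out at other nodes. This justifies committing to the vectors $\{v_{ij}\}$ once and installing them as bases at every node in a single pass. Everything else reduces to dimension counting and the two invoked lemmas.
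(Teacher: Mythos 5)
Your proof is correct and follows essentially the same route as the paper: the same symbol-sharing graph in the forward direction (with Lemma~\ref{repairsubspaceIsNodeIntersection} forcing simplicity and $\alpha=d$ forcing $d$-regularity), and the same replacement of node contents by pairwise repair data, i.e.\ bases drawn from the intersections $W_i\cap W_j$, justified by Lemma~\ref{substitutionLemma} in the reverse direction. The only cosmetic differences are that you establish the graph-existence criterion via the handshake lemma plus an explicit circulant construction where the paper cites Erd\H{o}s--Gallai, and that you make explicit the (correct) observation that the substitutions at different nodes do not interfere because each one preserves every node subspace $W_i$.
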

\begin{proof}
	If there is an MBR code with all symbols replicated twice, a corresponding $d$-regular graph can be constructed by putting an edge between every node sharing a symbol. From Lemma-\ref{repairsubspaceIsNodeIntersection}, it is clear that the graph  thus constructed will be simple, as $\beta=1$. Conversely, suppose there exists a simple $d$-regular graph $\mathcal{G}$ on $n$ vertices. Take an arbitrary $(n,k,d,\beta=1)$ MBR code and map it's nodes to $\mathcal{G}$. For each adjacent vertex-$j$ of vertex-$i$, the repair-data between node-$i$ and node-$j$ will be put in both node-$i$ and node-$j$, replacing a symbol of the starting MBR code in each of these two nodes. Repeating this for every node-$i$ will yield an MBR code (follows from Lemma-\ref{substitutionLemma}) with double replication. The second statement follows from Erdos-Gallai Theorem \cite{ErdosGallai}.
\end{proof}

\begin{cor}
	The constructions in \ref{deqkltnm2}, \ref{kltdltnm2}, \ref{subsec:precodedPM} having an RBT code in the first $(d+1)$ nodes, can be easily transformed to codes with double replication for all symbols, if $n\geq2(d+1)$ and $nd$ is even. These constructions have a compact description for the repair-data between nodes-$i,j$ $\in$ \{$(d+2),(d+3),\ldots,n$\}, which is of the form ${\bf{y}}_i^t{\bf{M}}{\bf{y}}_j$. However, this transformation will be at the expense of HBT repairability property of first $d$ nodes (from any other $d$ nodes).
\end{cor}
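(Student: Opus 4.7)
The strategy is to reduce the problem to Theorem \ref{thm:MBRwithrep2existence} by constructing a simple $d$-regular graph $\mathcal{G}$ on the $n$ vertices whose edges encode which pairs of nodes will share a repair symbol. In each of the three constructions the first $(d+1)$ nodes already embed an RBT-MBR code, so on that block the sharing pattern is already the complete graph $K_{d+1}$, and every vertex in this inner block has its full target degree $d$. Consequently no new edge incident to the inner nodes can or need be added, and the remaining task is to place a $d$-regular simple graph $\mathcal{H}$ on the $n-(d+1)$ outer vertices; the desired $\mathcal{G}$ will then be $K_{d+1}\sqcup \mathcal{H}$.

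Existence of $\mathcal{H}$ is where the two numerical hypotheses enter. By the Erd\H{o}s--Gallai condition used in Theorem \ref{thm:MBRwithrep2existence}, a simple $d$-regular graph on $m$ vertices exists iff $m\geq d+1$ and $md$ is even. With $m=n-(d+1)$, the first condition is exactly $n\geq 2(d+1)$, the first hypothesis. For the parity, $(d+1)d$ is a product of consecutive integers and hence even, so $md=nd-(d+1)d$ has the same parity as $nd$; the second hypothesis then delivers the required parity.

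Given $\mathcal{G}$, I would apply Lemma \ref{substitutionLemma} to each outer node $i\in\{d+2,\ldots,n\}$, replacing its $d$ stored coordinates by the $d$ repair symbols that its neighbours in $\mathcal{H}$ send to it. The lemma certifies that the new coordinates form another basis of $W_i$ and that the overall code remains a linear $(n,k,d,\beta=1)$ MBR code; by construction each substituted symbol now lies in exactly the two endpoints of its edge in $\mathcal{H}$, so double replication is attained on the outer block, while the inner $(d+1)$ nodes already enjoy it from the RBT structure. For the compact description, Construction-A (both the $d=k$ and $k<d$ variants) has ${\bf{n}}_j={\bf{M}}{\boldsymbol{\phi}}_j$ whenever $j\neq d+1$, so the repair symbol between outer nodes $i,j$ is ${\boldsymbol{\phi}}_i^t{\bf{M}}{\boldsymbol{\phi}}_j$, i.e.\ of the advertised shape ${\bf{y}}_i^t{\bf{M}}{\bf{y}}_j$ with ${\bf{y}}_i={\boldsymbol{\phi}}_i$; for Construction-B the repair symbol is ${\boldsymbol{\psi}}_i^t{\bf{M}}{\boldsymbol{\psi}}_j$ by \eqref{eq:PM_repairdata}, giving the same form with ${\bf{y}}_i={\boldsymbol{\psi}}_i$.

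The HBT caveat is immediate once the substitution is in place. HBT for a node in $[d]$ requires every eligible helper, including the outer ones, to have the requested repair coordinate sitting uncomputed in its store. After substitution, an outer node's coordinates encode repair data for its edges in $\mathcal{H}$ rather than the original PM-style entries, so the coordinate an inner node would have read off is now available only through a linear computation, not as a bare transfer. No step in the argument looks genuinely difficult; the only places requiring care are the parity check above and the invocation of Lemma \ref{substitutionLemma} that substituting all $d$ coordinates of an outer node simultaneously still yields a valid basis of $W_i$, which the lemma supplies.
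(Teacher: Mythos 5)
Your proposal is correct and follows the route the paper intends: keep the complete graph $K_{d+1}$ already realized by the embedded RBT block, place a simple $d$-regular graph on the remaining $n-(d+1)$ vertices (which exists precisely under $n\geq 2(d+1)$ and $nd$ even, since $(d+1)d$ is always even), and invoke Lemma \ref{substitutionLemma} to substitute each outer node's contents by the repair data from its $d$ neighbours. Your verification of the two numerical hypotheses, the identification of the repair symbols as ${\boldsymbol{\phi}}_i^t{\bf{M}}{\boldsymbol{\phi}}_j$ (resp.\ ${\boldsymbol{\psi}}_i^t{\bf{M}}{\boldsymbol{\psi}}_j$), and the explanation of the lost HBT property all match the argument implicit in Theorem \ref{thm:MBRwithrep2existence} and its proof.
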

\begin{cor}
	In particular, if $(d+1)| n$, one can construct an MBR code, which will be a concatenation of $\frac{n}{(d+1)}$ RBT codes.
	\end{cor}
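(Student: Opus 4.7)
The plan is to reduce this corollary to the preceding one by exhibiting a specific simple $d$-regular graph on $n$ vertices whose block structure yields the desired concatenation. When $(d+1)\mid n$, partition $[n]$ into $n/(d+1)$ subsets of size $d+1$ and let $\mathcal{G}$ be the vertex-disjoint union of complete graphs $K_{d+1}$, one per subset. This $\mathcal{G}$ is clearly simple and $d$-regular, and the parity condition $nd$ even required by the preceding corollary is automatic, since $d(d+1)$ is always even and $nd=(n/(d+1))\cdot d(d+1)$.

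Next, I would start from any of the base constructions in Section \ref{sec:constructionGeneral} (for concreteness, Construction-B of \ref{subsec:precodedPM}), in which the first $d+1$ nodes already realise an RBT-MBR code of the type described in \ref{RBTdiscussion}. Choose the partition so that $\{1,\ldots,d+1\}$ is one of its blocks, so that no re-encoding is needed there. For each of the remaining $n/(d+1)-1$ blocks $\mathcal{B}$ and each node $i\in \mathcal{B}$, I would apply Lemma \ref{substitutionLemma} to re-encode node-$i$ using the $d$ repair-data symbols it exchanges with the other nodes of $\mathcal{B}$; by Lemma \ref{repairsubspaceIsNodeIntersection} each such repair symbol spans exactly $W_i\cap W_j$ for the corresponding neighbour $j\in \mathcal{B}$ and is therefore simultaneously stored by both endpoints of the edge $\{i,j\}$. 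Lemma \ref{substitutionLemma} guarantees that the resulting vector code is still an $(n,k,d,\beta=1)$ MBR code.

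Finally, I would observe that after the transformation every node stores exactly $d=\alpha$ symbols, one for each of its $d$ neighbours in $\mathcal{G}$, and each such symbol appears only at the two endpoints of its edge. Because the edges of $\mathcal{G}$ partition into the edge sets of the $n/(d+1)$ disjoint copies of $K_{d+1}$, the node contents decompose accordingly, so the overall code is literally a concatenation of $n/(d+1)$ RBT-MBR codes of the type in \ref{RBTdiscussion}, with double replication of every code symbol inherited from the preceding corollary.

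I do not anticipate a substantive obstacle here; once the correct $d$-regular graph is chosen, all of the non-trivial work has already been carried out in Theorem \ref{thm:MBRwithrep2existence} and the preceding corollary. The only point requiring a bit of care is aligning the partition of $[n]$ with the pre-existing $K_{d+1}$ block of the base construction, so that the substitution procedure of Lemma \ref{substitutionLemma} has to be applied only to the remaining $n-(d+1)$ nodes.
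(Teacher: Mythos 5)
Your proposal is correct and follows essentially the same route the paper intends: instantiate the graph in Theorem~\ref{thm:MBRwithrep2existence} as a disjoint union of $\frac{n}{d+1}$ copies of $K_{d+1}$ (for which $nd$ is automatically even) and apply the repair-data substitution of Lemma~\ref{substitutionLemma} block by block, exactly as in the paper's $(n=6,k=2,d=2)$ example. Your additional remark about aligning one block with the pre-existing systematic RBT block is consistent with that example and requires no further justification.
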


\begin{example}
	\emph{(Concatenated RBT)} Let the MBR parameters be $(n=6,k=2,d=2)$. Hence, file-size = $3$ symbols (over $\mathbb{F}_{2^2}$). Consider transforming the construction in \ref{deqkltnm2}. See Fig. \ref{fig:concatenatedRBT}.
	\begin{align}
	\boldsymbol{M} &= 
	\begin{bmatrix}
	m_1 & m_2 \\           
	m_2 & m_3
	\end{bmatrix} & 
	\boldsymbol{\phi} &= 
	\begin{bmatrix}
	1 & 1 & 1\\           
	1 & 2 & 3
	\end{bmatrix}\nonumber\nonumber
	\end{align}
\end{example}

$\therefore$ ${\boldsymbol{\phi}}_4=[1\ 1]^t$, ${\boldsymbol{\phi}}_5=[1\ 2]^t$ \& ${\boldsymbol{\phi}}_6=[1\ 3]^t$. ${\bf{n}}_1=[m_1\ m_2]^t$, ${\bf{n}}_2=[m_2\ m_3]^t$, ${\bf{n}}_3=[m_1\ m_3]^t$, ${\bf{n}}_4=[{\boldsymbol{\phi}}_5^tM{\boldsymbol{\phi}}_4\ {\boldsymbol{\phi}}_6^tM{\boldsymbol{\phi}}_4]^t=[m_1+3m_2+2m_3\ \ m_1+2m_2+3m_3]^t$, ${\bf{n}}_5=[{\boldsymbol{\phi}}_4^tM{\boldsymbol{\phi}}_5\ {\boldsymbol{\phi}}_6^tM{\boldsymbol{\phi}}_5]^t=[m_1+3m_2+2m_3\ \ m_1+m_2+m_3]^t$ and ${\bf{n}}_6=[{\boldsymbol{\phi}}_4^tM{\boldsymbol{\phi}}_6\ {\boldsymbol{\phi}}_5^tM{\boldsymbol{\phi}}_6]^t=[m_1+2m_2+3m_3\ \ m_1+m_2+m_3]^t$. \{${\bf{n}}_1$, ${\bf{n}}_2$, ${\bf{n}}_3$\}, \{${\bf{n}}_4$, ${\bf{n}}_5$, ${\bf{n}}_6$\} form two RBT codes, where first one has systematic data.

\begin{figure}[t]
	\centering
	\captionsetup{justification=centering}
	\includegraphics[width=4in]{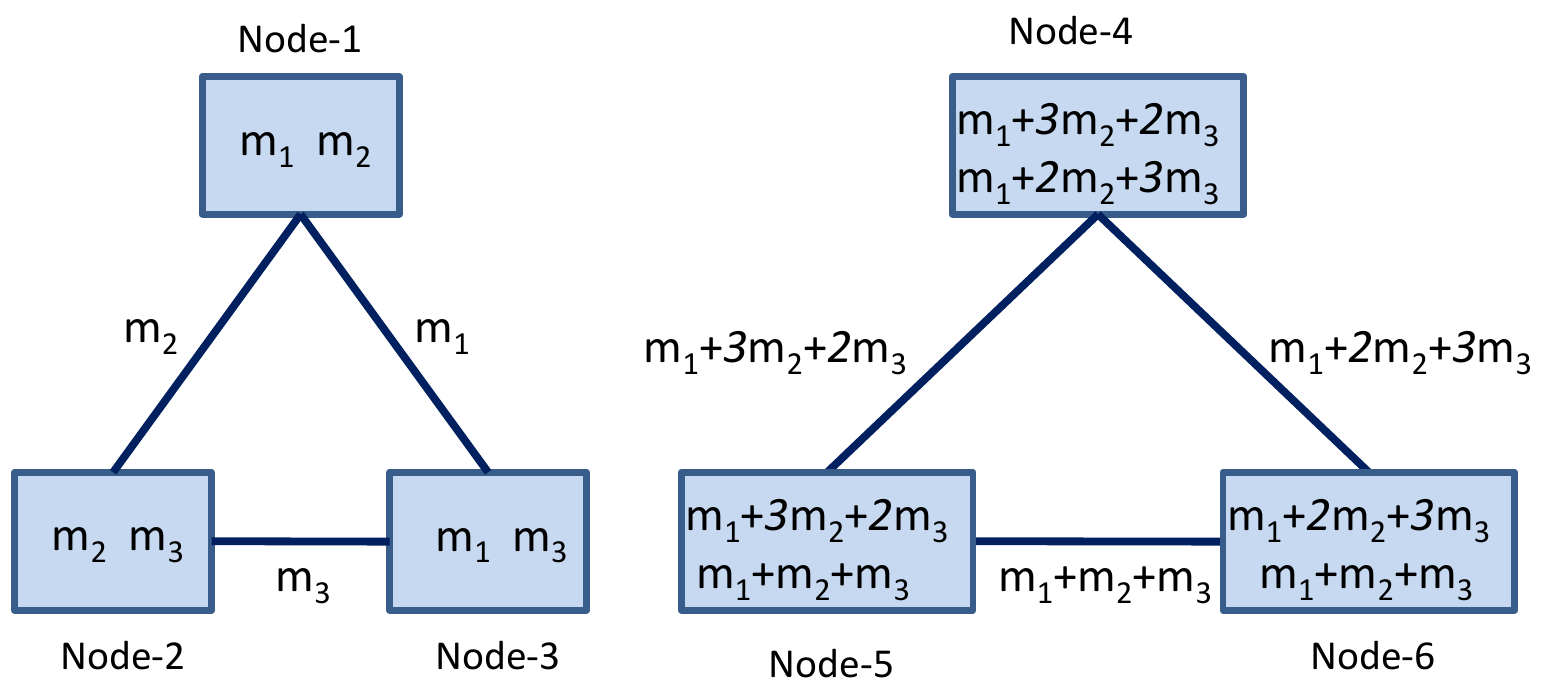}
	\caption{An example construction for $n=6, k=2, d=2$.}
	\label{fig:concatenatedRBT}
\end{figure}

%
%
%
%
%

\begin{thm}
	If $nd$ is odd, it is possible to construct an $(n,k,d,\beta=1)$ MBR code, where all but one symbols have replication $2$.
\end{thm}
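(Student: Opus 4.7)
The plan is to adapt the argument of Theorem~\ref{thm:MBRwithrep2existence} to the case $nd$ odd by replacing the $d$-regular graph used there with a simple graph $\mathcal{G}$ on $n$ vertices that is $d$-regular except at a single vertex $v$ of degree $d-1$. When $nd$ is odd, both $n$ and $d$ are odd, so $d=n-1$ is excluded and we automatically have $d\le n-2$; this slack will be enough to accommodate the missing edge.

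The first step is to show that a simple graph with degree sequence $(d,d,\ldots,d,d-1)$ exists on $n$ vertices. The degree sum $nd-1$ is even, so the sequence is a candidate, and the Erd\H{o}s--Gallai conditions are straightforward to verify: for $k\le d-1$ the inequality collapses to $d\le n-1$, for $k=n$ it reads $nd-1\le n(n-1)$, and for $d\le k\le n-1$ the tightest case is $k=d$, which reduces to $n\ge d+1+1/d$, in turn implied by $n\ge d+2$. Hence such a graph $\mathcal{G}$ exists.

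The remainder of the construction follows the template of Theorem~\ref{thm:MBRwithrep2existence}. Starting from any $(n,k,d,\beta=1)$ MBR code (for instance the PM-MBR code of Section~\ref{PMdiscussion}), identify its $n$ nodes with the vertices of $\mathcal{G}$. For every edge $\{i,j\}$ of $\mathcal{G}$ fix a single nonzero vector $s_{ij}$ spanning the $1$-dimensional intersection $W_i\cap W_j$ (which equals the repair subspace between $i$ and $j$ by Lemma~\ref{repairsubspaceIsNodeIntersection}). For each vertex $i\ne v$ of degree $d$, replace the contents of node $i$ by the $d$ symbols $\{s_{ij}:j\in\mathcal{N}(i)\}$; by Lemma~\ref{substitutionLemma} these form a basis of $W_i$ and the modified code remains a linear MBR code. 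For the vertex $v$, fill its $d$ storage slots with the $d-1$ symbols $\{s_{vj}:j\in\mathcal{N}(v)\}$ together with one extra symbol $x$ spanning $W_v\cap W_w$ for an arbitrary non-neighbor $w$ of $v$; Lemma~\ref{substitutionLemma} again certifies that these $d$ vectors form a valid basis of $W_v$.

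For the replication count, every edge symbol $s_{ij}$ is stored in exactly two nodes, contributing $(nd-1)/2$ distinct symbols of replication $2$, while $x$ appears only in $v$ --- its potential second copy in $w$ is absent because the re-encoded contents of $w$ are drawn from edges of $\mathcal{G}$ incident to $w$ and $v\notin\mathcal{N}(w)$. Thus $x$ has replication $1$ and is the unique symbol of replication below $2$, as required. The main obstacle is the combinatorial existence of $\mathcal{G}$; once that is in place the coding-theoretic content is a verbatim reuse of Lemmas~\ref{repairsubspaceIsNodeIntersection} and~\ref{substitutionLemma}.
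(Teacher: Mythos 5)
Your proof is correct and follows essentially the same route as the paper: build a simple graph on $n$ vertices with every degree equal to $d$ except one vertex of degree $d-1$, re-encode each node with the repair data of its neighbors as in Theorem~\ref{thm:MBRwithrep2existence}, and fill the deficient vertex's last slot with repair data from a non-neighbor, which then appears only once. The only (minor) difference is how the graph is obtained --- you verify Erd\H{o}s--Gallai directly for the degree sequence $(d,\ldots,d,d-1)$, while the paper takes the union of a $(d-1)$-regular graph with a near-perfect matching on the remaining $n-1$ vertices; your route is if anything cleaner, since it sidesteps the question of whether that union stays simple.
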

\begin{proof}
	Let $d'=(d-1)$. Construct a $d'$-regular graph, $\mathcal{G}_1$ on $n$ vertices. Now, remove vertex-$i$ and construct a $1$-regular graph $\mathcal{G}_2$ on the remaining $(n-1)$ vertices. Form a graph union $\mathcal{G}=\mathcal{G}_1\cup\mathcal{G}_2$ and map the vertices to nodes of an  arbitrary $(n,k,d,\beta=1)$ MBR code. Using the technique in proof of Theorem-\ref{thm:MBRwithrep2existence}, all except one symbol in the transformed code will have replication-$2$. For node-$i$ with degree $d'$, repair-data coming from a non-adjacent node (w.r.t $\mathcal{G}$) will be taken as the $d^{th}$ symbol. 
\end{proof}

\bibliographystyle{IEEEtran}
\bibliography{MBRCodesWithReplication}

\begin{thebibliography}{10}
\providecommand{\url}[1]{#1}
\csname url@samestyle\endcsname
\providecommand{\newblock}{\relax}
\providecommand{\bibinfo}[2]{#2}
\providecommand{\BIBentrySTDinterwordspacing}{\spaceskip=0pt\relax}
\providecommand{\BIBentryALTinterwordstretchfactor}{4}
\providecommand{\BIBentryALTinterwordspacing}{\spaceskip=\fontdimen2\font plus
\BIBentryALTinterwordstretchfactor\fontdimen3\font minus
  \fontdimen4\font\relax}
\providecommand{\BIBforeignlanguage}[2]{{%
\expandafter\ifx\csname l@#1\endcsname\relax
\typeout{** WARNING: IEEEtran.bst: No hyphenation pattern has been}%
\typeout{** loaded for the language `#1'. Using the pattern for}%
\typeout{** the default language instead.}%
\else
\language=\csname l@#1\endcsname
\fi
#2}}
\providecommand{\BIBdecl}{\relax}
\BIBdecl

\bibitem{DimGodWuWaiRam}
A.~Dimakis, P.~Godfrey, Y.~Wu, M.~Wainwright, and K.~Ramchandran, ``{Network
  coding for distributed storage systems},'' \emph{IEEE Trans. Inf. Theory},
  vol.~56, no.~9, pp. 4539--4551, Sep. 2010.

\bibitem{hadoop}
``{Hadoop},'' http://hadoop.apache.org.

\bibitem{mapreduce}
J.~Dean and S.~Ghemawat, ``Mapreduce: simplified data processing on large
  clusters,'' \emph{Commun. {ACM}}, vol.~51, no.~1, pp. 107--113, 2008.

\bibitem{ElrRam}
S.~El~Rouayheb and K.~Ramchandran, ``Fractional repetition codes for repair in
  distributed storage systems,'' in \emph{Proc. 48th Annu. Allerton Conf.
  Communication, Control, and Computing}, Urbana-Champaign, IL, Sep. 2010, pp.
  1510 --1517.

\bibitem{RasShaKumRam_allerton}
K.~Rashmi, N.~Shah, P.~Kumar, and K.~Ramchandran, ``Explicit construction of
  optimal exact regenerating codes for distributed storage,'' in \emph{Proc.
  47th Annu. Allerton Conf. Communication, Control, and Computing},
  Urbana-Champaign, IL, Sep. 2009, pp. 1243--1249.

\bibitem{RasShaKum_pm}
K.~V. Rashmi, N.~B. Shah, and P.~V. Kumar, ``{Optimal Exact-Regenerating Codes
  for Distributed Storage at the MSR and MBR Points via a Product-Matrix
  Construction},'' \emph{IEEE Trans. Inf. Theory}, vol.~57, no.~8, pp.
  5227--5239, Aug. 2011.

\bibitem{ShaRasKumRam_rbt}
N.~Shah, K.~Rashmi, P.~Kumar, and K.~Ramchandran, ``{Distributed Storage Codes
  With Repair-by-Transfer and Nonachievability of Interior Points on the
  Storage-Bandwidth Tradeoff},'' \emph{IEEE Trans. Inf. Theory}, vol.~58,
  no.~3, pp. 1837--1852, Mar. 2012.

\bibitem{NovelRBRMBRLinChung}
S.~Lin and W.~Chung, ``{Novel Repair-by-Transfer Codes and Systematic Exact-MBR
  Codes with Lower Complexities and Smaller Field Sizes},'' \emph{{IEEE} Trans.
  Parallel Distrib. Syst.}, vol.~25, no.~12, pp. 3232--3241, 2014.

\bibitem{UpdateEfficientPM}
Y.~S. Han, H.~Pai, R.~Zheng, and P.~K. Varshney, ``{Update-Efficient
  Error-Correcting Product-Matrix Codes},'' \emph{{IEEE} Trans. Commun.},
  vol.~63, no.~6, pp. 1925--1938, Jun. 2015.

\bibitem{NiharHBTnonexistence}
N.~B. Shah, ``{On Minimizing Data-Read and Download for Storage-Node
  Recovery},'' \emph{{IEEE} Commun. Lett.}, vol.~17, no.~5, pp. 964--967, May
  2013.

\bibitem{LidlFiniteFields}
R.~Lidl and H.~Niederreiter, \emph{{Introduction to Finite Fields and Their
  Applications}}.\hskip 1em plus 0.5em minus 0.4em\relax New York, NY, USA:
  Cambridge University Press, 1986.

\bibitem{Gab}
E.~Gabidulin, ``{Theory of Codes with maximum rank distance},'' \emph{Problems
  of Information Transmission}, vol.~21, no.~7, pp. 1--12, 1985.

\bibitem{ErdosGallai}
P.~Erd{\H{o}}s and T.~Gallai, ``Graphs with prescribed degrees of vertices,''
  \emph{Mat. Lapok}, vol.~11, pp. 264--274, 1960.

\end{thebibliography}

\end{document}